\theoremstyle{plain}
\newtheorem{theorem}{Theorem}
\newtheorem{lemma}{Lemma}
\newtheorem{corollary}{Corollary}
\newtheorem{proposition}{Proposition}
\theoremstyle{definition}
\newtheorem{definition}{Definition}
\newtheorem{algorithm}{Algorithm}
\newcommand{\C}{{\mathcal C}}
\newcommand{\T}{{\mathcal T}}
\newcommand{\W}{{\mathcal W}}
\newcommand{\U}{{\mathcal U}}
\renewcommand{\S}{{\mathcal S}}
\newcommand{\ba}{{\boldsymbol a}}
\newcommand{\bb}{{\boldsymbol b}}
\newcommand{\bu}{{\boldsymbol u}}
\newcommand{\bv}{{\boldsymbol v}}
\newcommand{\by}{{\boldsymbol y}}
\newcommand{\bx}{{\boldsymbol{x}}}
\newcommand{\bcj}{{\boldsymbol c}_j}
\newcommand{\bcjs}{{\boldsymbol c}_{j^*}}
\newcommand{\bco}{{\boldsymbol c}_1}
\newcommand{\bcn}{{\boldsymbol c}_n}
\newcommand{\bct}{{\boldsymbol c}_2}
\newcommand{\bave}{b_{\mathsf{AVE}}}
\newcommand{\bbF}{{\mathbb F}}
\newcommand{\bbZ}{{\mathbb Z}}
\newcommand{\ff}{\mathbb{F}}
\newcommand{\fq}{\mathbb{F}_q}
\newcommand{\fql}{\mathbb{F}_{q^{\ell}}}
\newcommand{\fqa}{\mathbb{F}_{q^{a}}}
\newcommand{\fqb}{\mathbb{F}_{q^{b}}}
\newcommand{\fqas}{\mathbb{F}_{q^{a}}^*}
\newcommand{\fqbs}{\mathbb{F}_{q^{b}}^*}
\newcommand{\fqls}{\mathbb{F}_{q^{\ell}}^*}
\newcommand{\fqlms}{\mathbb{F}_{q^{\ell-s}}}
\newcommand{\fqm}{\mathbb{F}_{q^m}}
\newcommand{\LW}{L_{\mathcal{W}}}
\newcommand{\om}{\omega}
\newcommand{\gam}{\gamma}
\newcommand{\bal}{\bm{\alpha}}
\newcommand{\bals}{\bm{\alpha}^*}
\newcommand{\baljs}{\bm{\alpha}_{j^*}}
\newcommand{\bbe}{\bm{\beta}}
\newcommand{\bde}{\bm{\delta}}
\newcommand{\bom}{\bm{\omega}}
\newcommand{\bga}{\bm{\gamma}}
\newcommand{\bla}{\bm{\lambda}}
\newcommand{\bxi}{\bm{\xi}}
\newcommand{\bmu}{\bm{\mu}}
\newcommand{\bnu}{\bm{\nu}}
\newcommand{\tr}{\mathsf{Tr}}
\newcommand{\trqlq}{\mathsf{Tr}_{\bbF_{q^\ell}/\bbF_q}}
\newcommand{\spn}{\mathsf{span}_{\fq}}
\newcommand{\rank}{\mathsf{rank}_{\fq}}
\newcommand{\im}{{\sf{im}}}
\newcommand{\vbc}{\vec{\boldsymbol{c}}}
\newcommand{\vc}{\vec{\boldsymbol{c}}}
\newcommand{\vg}{\vec{\boldsymbol{g}}}
\newcommand{\vlam}{\vec{\boldsymbol{\lambda}}}
\newcommand{\define}{\stackrel{\mbox{\tiny $\triangle$}}{=}}
\newcommand{\rsk}{\text{RS}(A,k)}
\newcommand{\grskl}{\text{GRS}(A,k,\vec{\boldsymbol{\lambda}})}
\newcommand{\grsnkl}{\text{GRS}(A,n-k,\vec{\boldsymbol{\lambda}})}
\newcommand{\fa}{f(\boldsymbol{\alpha})}
\newcommand{\faj}{f(\boldsymbol{\alpha}_j)}
\newcommand{\gaj}{g(\boldsymbol{\alpha}_j)}
\newcommand{\gix}{g_i(x)}
\newcommand{\bw}{{\sf{bw}}}
\newcommand{\bwsi}{{\sf{bw}_{SI}}}
\newcommand{\w}{\bm{w}}
\begin{document}

\title{Repairing Reed-Solomon Codes\\ with Side Information}


\author{Thi Xinh Dinh$^{\dag*}$, Ba Thong Le$^*$, Son Hoang Dau$^\dag$, Serdar Boztas$^\dag$, Stanislav Kruglik$^\ddag$, Han Mao Kiah$^\ddag$,\\ Emanuele Viterbo**, Tuvi Etzion$^\#$, and Yeow Meng Chee$^\#$\\$^\dag$RMIT University, $^\ddag$Nanyang Technological University, $^*$Tay Nguyen University,\\ $^{**}$Monash University, $^\#$National University of Singapore}

\maketitle

\begin{abstract}
We generalize the problem of recovering a lost/erased symbol in a Reed-Solomon code to the scenario in which some \textit{side information} about the lost symbol is known. The side information is represented as a set $S$ of linearly independent combinations of the sub-symbols of the lost symbol. When $S = \varnothing$, this reduces to the standard problem of repairing a single codeword symbol. When $S$ is a set of sub-symbols of the erased one, this becomes the repair problem with partially lost/erased symbol.
We first establish that the minimum repair bandwidth depends on $|S|$ and not the content of $S$
and construct a lower bound on the repair bandwidth of a linear repair scheme with side information $S$.
We then consider the well-known subspace-polynomial repair schemes and show that their repair bandwidths can be optimized by choosing the right subspaces. 
Finally, we demonstrate several parameter regimes where the optimal bandwidths can be achieved 
for full-length Reed-Solomon codes.
\end{abstract}


\section{Introduction}
\label{sec:intro} 

To prevent data loss and increase data availability in distributed storage systems, a file is usually split into $k$ data chunks and transformed (encoded) into $n>k$ coded chunks using an erasure code, and then stored across $n$ different storage nodes. If the code is MDS~\cite{MW_S}, such a system can withstand any $n-k$ failures because the entire file can be recovered from any $k$ chunks. 
When only one node fails, which is usually the most typical case (see, e.g.~\cite{Rashmi2013}), a repair/replacement node must download enough data from other helper nodes to recover its lost chunk. In the \textit{repair-bandwidth} problem~\cite{Dimakis_etal2010,Dimakis_etal2010_survey}, one seeks to minimize the repair bandwidth, i.e. the amount of downloaded data required for a successful recovery of the lost chunk. 
A low-bandwidth repair scheme can also be used for \textit{degraded read}, in which requests for a chunk stored at an unavailable node can be served by other available nodes~\cite{KhanBurnsPlankPierceHuang2012}. 
This important problem has been extensively studied in the literature (see, e.g.~\cite{ 
EC_DSS_survey_2018} and the references therein). 

In this work we generalize the setting of the repair-bandwidth problem to accommodate \textit{side information} (see Fig.~\ref{fig:toy_example} for a toy example). 
In information theory, the concept of side information has been investigated in numerous contexts, including source coding~\cite{Wyner_1975}, channel coding~\cite{KeshetSteinbergMerhav_2008}, list decoding~\cite{Guruswami_2003}, index coding~\cite{BirkKol_2006, BarYossefBirkJayramKol_2011}, and private information retrieval~\cite{KadheBarciaHeidarzadehElRouayhebSprintson_2020}.
In the context of the repair problem, side information refers to the additional information that the repair node knows about the lost chunk while recovering it. The side information could arise due to a \textit{partial loss} of data, which means that part of the chunk is still accessible and serves as side information, or due to partial information gained from the previous communications or from other sources. The question of interest is that given the side information, what is the lowest repair bandwidth we can achieve. We refer to this as the repair-bandwidth \textit{with side information} problem. 

\begin{figure}[t]
    \centering \includegraphics[scale=0.9]{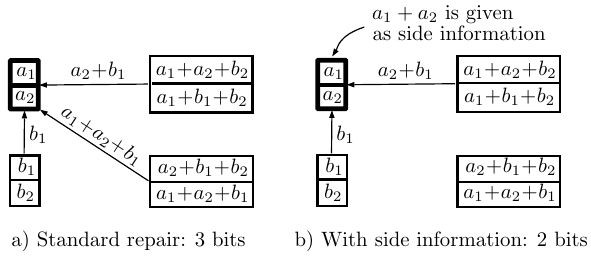}
    \caption{An illustration of repair schemes that recover $\bm{a}\hspace{-2pt} =\hspace{-2pt} (a_1, a_2)$ with and without side information. The side information $a_1\hspace{-2pt}+\hspace{-2pt}a_2$ leads to a reduction of 1 bit in the repair bandwidth. The repair node first obtains $a_2\hspace{-2pt}\leftarrow\hspace{-2pt} (a_2\hspace{-2pt}+\hspace{-2pt}b_1)\hspace{-2pt}-\hspace{-2pt}b_1$, and then $a_1 \leftarrow (a_1+a_2)-a_2$.}
    \label{fig:toy_example}
\end{figure}

In the scope of this work, we focus on Reed-Solomon codes, which is currently the most widely used families of erasure codes in distributed storage systems (see~\cite{DinhNguyenMohanSerdarLuongDau_ISIT2022}). 
The repair bandwidth as well as the closely related metric called I/O cost and sub-packetization size have been investigated in a number of recent works for different families of Reed-Solomon codes~\cite{Shanmugam2014,GuruswamiWootters2016,GuruswamiWootters2017,DauMilenkovic2017,DauDuursmaKiahMilenkovicTwoErasures2017,DauDuursmaKiahMilenkovic2018,YeBarg_TIT2017,LiWangJafarkhani-Allerton-2017,LiWangJafarkhani_TIT_2019,LiWangJafarkhani_CommLett_2021, ChowdhuryVardy2017,ChowdhuryVardy_TIT_2019,TamoYeBarg2017,TamoYeBarg2018,DauViterbo-ITW-2018,DauDuursmaChu-ISIT-2018,DuursmaDau2017,DauDinhKiahTranMilenkovic2021,LiDauWangJafarkhaniViterbo_ISIT2019,DinhNguyenMohanSerdarLuongDau_ISIT2022,DinhBoztasDauViterbo_2023,XuZhangWangZhang_2023,BermanBuzagloDorShanyTamo_ISIT_2021,ConTamo_2022,ConShuttyTamoWootters_2023}. For an $[n,k]_{\fql}$ Reed-Solomon code over the finite field $\fql$, a coded chunk, which is an element in $\fql$ (called a \textit{symbol}), is repaired from a number of $\fq$-elements (called \textit{sub-symbols}) extracted from other coded chunks. Each symbol $\bga\in \fql$ consists of $\ell$ sub-symbols in $\fq$, i.e. $\bga = (\gam_1,\ldots,\gam_\ell)\in \fq^\ell$. The repair bandwidth is measured in the number of extracted sub-symbols, while the side information of a symbol $\bga$ can be represented as a set $S$ of $\fq$-linearly independent combinations of its sub-symbols $\gam_1,\ldots,\gam_\ell$.     
In summary, our contributions are given below.
\begin{itemize}
    \item We show that the minimum repair bandwidth for a codeword symbol of a Reed-Solomon code given the side information $S$ depends on it size $|S|$, and independent of the specific choice of its elements.
    \item We obtain a lower bound on the repair bandwidth of a linear repair scheme for a failed node with side information.
    \item For subspace-polynomial repair schemes for $[n,k]_{\fql}$ Reed-Solomon codes with $n\hspace{-2pt}-\hspace{-2pt}k\hspace{-2pt}\geq\hspace{-2pt} q^m$, $m \hspace{-2pt}<\hspace{-2pt} \ell$ (\cite{GuruswamiWootters2016,GuruswamiWootters2017,DauMilenkovic2017,DauDinhKiahTranMilenkovic2021}), we prove that special subspaces can be chosen to minimize the repair bandwidth among all such schemes. 
    \item A subspace that minimizes the repair bandwidth among all subspace-polynomial repair schemes can be found by solving an optimization problem on subspace intersections, which is of its own interest. We solve the problem in a few parameter regimes, leaving others for future research.
\end{itemize}

The paper is organized as follows. Section~\ref{sec:pre} provides required notations and definitions. Section~\ref{sec:constructions} is devoted for the description and solutions of the repair with side information problem in different cases. We conclude the paper in Section~\ref{sec:conclusions}.


\section{Preliminaries}
\label{sec:pre}

\subsection{Definitions and Notations}
\label{subsec:def-notation}
\vspace{-2pt}
Let $q$ be a prime power, $\fq$ be the finite field of $q$ elements and $\fql$ be the extension field of degree $\ell$ of $\fq$.
We use $[n]$ to denote the set $\{1, 2, \ldots,n\}$, $a\mid b$ to denote that $a$ divides $b$, and $(a,b)$ for $\gcd(a,b)$, for $a,b \in \mathbb{Z}$. For a set $U$, let $U^* \define U\setminus \{0\}$, and $\bga U \define \{\bga u\colon u \in U\}$. 
\textcolor{black}{
We use $\spn(U)$ to denote the $\fq$-subspace of $\fql$ spanned by a subset $U \subseteq \fql$. We use $\dim_{\fq}(\cdot)$ and $\rank(\cdot)$ (or $\dim(\cdot)$ and $\mathsf{rank}(\cdot)$ for short) for the dimension of a subspace and the rank of a set of vectors over $\fq$. The (field) trace of an element $\bb \in \fql$ over $\fq$ is $\trqlq(\bb) = \sum_{i=0}^{\ell - 1} \bb^{q^i}$. We also write $\tr(\cdot)$ instead of $\trqlq(\cdot)$ for simplicity.}

Let $\C$ be a \emph{linear} $[n,k]$ \emph{code}  over $\fql$. Then $\C$ is an $k$-dimensional $\fql$-subspace of $\fql^n$. A \emph{codeword} of $\C$ is an element $\vbc=(\bco,\bct,\ldots,\bcn)\in \C$ and its codeword symbols are the components $\bcj$, $j \in [n]$. 
The \emph{dual} code of a code $\C$ is the orthogonal complement $\C^\perp$ of $\C$, $\C^\perp = \{\vg \in \fql^n: \langle \vc,\vg\rangle = 0, \forall \vc \in \C\}$, where $\langle \vc, \vg \rangle$ is the scalar product of $\vc$ and $\vg$. The code $\C^\perp$ is an $\fql$-subspace of $\fql^n$ with dimension $n - k$.
The elements of $\C^\perp$ are called \textit{dual codewords}. 
The number $n-k$ is called the \emph{redundancy} of the code $\C$.

\begin{definition} 
\label{def:RS}
Let $A=\{\bal_j\}_{j=1}^n$ be a subset of size $n$ in $\fql$. A \textit{Reed-Solomon code} $\rsk \subseteq \fql^n$ of dimension $k$ with \textit{evaluation points} $A$ is defined as \vspace{-5pt} 
\[
\rsk = \big\{\big(f(\bal_1),\ldots,f(\bal_n)\big) \colon f \in \fql[x],\ \deg(f) < k \big\}, \vspace{-5pt}
\]
where $\fql[x]$ is the ring of polynomials over~$\fql$.
We also use the notation RS$(n,k)$, ignoring the evaluation points. 
\end{definition}
\vspace{-5pt}
A \emph{generalized} Reed-Solomon code, $\grskl$, where $\vec{\bla} = (\bla_1,\ldots,\bla_n)\in \fql^n$, is the set of codewords $\big( \bla_1g(\bal_1),\ldots,\bla_n g(\bal_n) \big)$, where $\bla_j \neq 0$ for all $j \in [n]$, $g \in \fql[x],\ \deg(g) < n-k$.
The dual code of a Reed-Solomon code $\rsk$ is a generalized Reed-Solomon code $\grsnkl$, 
for some multiplier vector $\vec{\bla}$~(\cite[Chap.~10]{MW_S}). 
We sometimes use the notation GRS$(n,k)$, ignoring $A$ and $\vlam$.

Let $f(x)$ be a polynomial corresponding to a codeword of the Reed-Solomon code $\C=\rsk$, and $g(x)$ be a polynomial of degree at most $n-k-1$, which corresponds to a codeword of the dual code $\C^\perp$. Then
$
\sum_{j=1}^n \gaj\big(\bla_j\faj\big) = 0, 
$
and we call the polynomial $g(x)$ a \textit{check} polynomial for $\C$. 


\vspace{-3pt}
\subsection{Trace Repair Method}
\label{subsec:tracerepair}
\vspace{-3pt}

Let RS$(n,k)$ be a Reed-Solomon code over $\fql$ with evaluation points $A$,
$\vc$ ~a codeword corresponding to polynomial $f(x)$, $f \in \fql[x],\ \deg(f) < k$, and $\bcjs = f(\bals)$ is a codeword symbol/node of $\vc$, where $\bals = \baljs \in A$.
A (linear) \emph{trace repair scheme} for $f(\bals)$ corresponds to a set of $\ell$ check polynomials $\left\{g_i(x)\right\}_{i\in [\ell]}$, $g_i \in \fql[x],\ \deg(g_i) < n-k$, that satisfies the \textit{Full-Rank Condition}:
$\rank\{g_i(\bals)\}_{i\in [\ell]}=\ell$.
The repair bandwidth of such a repair scheme (in $\fq$-symbols) is ${\sf{bw}} = \sum_{\bal\in A \setminus \{\bals\}} \rank(\{g_i(\bal)\}_{i \in [\ell]})$. 
To repair all $n$ components of~$\vc$, we need $n$ such repair schemes (possibly with repetition). 
See, e.g. \cite{DauDinhKiahTranMilenkovic2021}, for a detailed explanation of why the above scheme works with an example. 



\section{Recovering an Erased Symbol with Side Information}
\label{sec:constructions}

\subsection{The Problem Description}
\label{sub:problem}

Let $\C$ be an RS$(n,k)$ code over $\fql$ with evaluation points $A\subseteq \fql$. Suppose that the codeword symbol $f(\bals)$ is erased and needs to be recovered, given a set of $\fq$-linearly independent combinations of its sub-symbols (elements of $\fq$) as \textit{side information}.
Note that for each vector of coefficients $\vec{\ba}=(a_1,\ldots,a_\ell)\in \fq^\ell$, there exists a $\bbe\in \fql$ such that the equality $\sum_{i=1}^\ell a_i\xi_i = \tr(\bbe \bxi)$ holds for all $\bxi \in \fql$. 
Therefore, we can represent the side information as a set $S = \{\bbe_i\}_{i=1}^s$, where $s = |S|$, and assume that $S$ is $\fq$-linearly independent. 
We assume that the replacement/recovery node already knows $s$ traces $\{\tr(\bbe_i f(\bals))\}_{i\in [s]}$, where $\{\bbe_i\}_{i \in [s]}\subseteq \fql$ is $\fq$-linearly independent.
Equivalently, we can represent the side information as a subspace $\S\define \spn(S)$. We call $S$ the \textit{side information set} and $\S$ the \textit{side information subspace}. Note that $S$ is a basis of $\S$ and $s=\dim(\S)$.

According to the trace-repair method, it needs to reconstruct some $\ell-s$ traces of $f(\bals)$, namely, $\{\tr(\bbe_i f(\bals))\}_{i\in [s+1,\ell]}$, referred to as \textit{target} traces, where $\{\bbe_i\}_{i \in [\ell]}$ is an $\fq$-basis of $\fql$.
We refer to $T\define \{\bbe_i\}_{i\in [s+1,\ell]}$ as the \textit{target set} and $\T\define \spn(T)$ the \textit{target subspace} with respect to the side information set $S$ (or the side information subspace $\S$). We capture this discussion in Proposition~\ref{pro:scheme_corresponds_polys}. Its proof is similar to~\cite[Thm.~4]{GuruswamiWootters2017}
and can be found in Appendix~\ref{app:linear_scheme}.

\begin{proposition}
\label{pro:scheme_corresponds_polys}
Let $S \define \{\bbe_i\}_{i\in [s]}$ be a linearly independent set and $f(\bals)$ be a symbol of Reed-Solomon code RS$(n,k)$ over $\fql$, $n \leq q^\ell$. A linear repair scheme for $f(\bals)$ with side information $S$ corresponds to a collection of $\ell-s$ polynomials $\{g_i(x)\}_{i \in [s+1,\ell]}\subset \fql[x]$, where $\deg(g_i) < n-k$, $T \define \{g_i(\bals)\}_{i\in [s+1,\ell]}$ and $S \cup T$ are linearly independent.
\end{proposition}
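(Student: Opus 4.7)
The plan is to reduce the statement to the standard trace--polynomial correspondence underlying the trace repair method. The main ingredient is the non-degeneracy of the trace pairing on $\fql / \fq$: every $\fq$-linear functional $\phi\colon \fql \to \fq$ has the form $\phi(\bxi) = \tr(\bbe \bxi)$ for a unique $\bbe \in \fql$, and a collection of such functionals is $\fq$-linearly independent iff the labels $\bbe$ are. Under this identification, possessing side information $S$ is equivalent to knowing the traces $\{\tr(\bbe_i \fas)\}_{i \in [s]}$, and since any $\ell$ traces whose labels form an $\fq$-basis of $\fql$ determine $\fas$ uniquely, a successful repair scheme is precisely one that produces $\ell - s$ further traces $\{\tr(\bbe_i \fas)\}_{i \in [s+1,\ell]}$ with labels $T$ such that $S \cup T$ is $\fq$-linearly independent (hence a basis of $\fql$).

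For the forward direction, I start from an arbitrary successful linear scheme. Every scalar the node downloads from helper $\bal_j$ is an $\fq$-linear functional applied to $\faj$, hence expressible as $\tr(\bga_{i,j} \faj)$ for some $\bga_{i,j} \in \fql$. By linearity, each of the $\ell - s$ reconstructed target traces is an $\fq$-linear combination of these quantities, yielding $\tr(\bbe_i \fas) = \sum_{j \neq j^*} \tr(\bga_{i,j}\faj)$ for $i \in [s+1,\ell]$. Identifying the vector $(\bga_{i,j})_{j \in [n]}$ (with a zero plugged in at position $j^*$) with a codeword of $\C^\perp = \grsnkl$, I obtain a polynomial $g_i$ of degree less than $n-k$ whose evaluation at $\bals$ equals $\bbe_i$ up to the nonzero multiplier $\lambda_{j^*}$; absorbing this scalar into $g_i$ gives $g_i(\bals) = \bbe_i$. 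The correctness of the scheme then translates exactly into the linear independence of $S \cup \{g_i(\bals)\}_{i \in [s+1,\ell]}$.

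For the reverse direction, I start from a family $\{g_i\}_{i \in [s+1,\ell]}$ satisfying the hypotheses and apply the parity-check identity $\sum_{j} g_i(\bal_j) \lambda_j \faj = 0$ to express $\tr\!\big(g_i(\bals) \lambda_{j^*} \fas\big)$ as an explicit $\fq$-linear combination of the helpers' sub-symbols; thus the target trace with label $g_i(\bals)$ can be computed from the downloaded data. Combining the $\ell - s$ reconstructed target traces with the $s$ side-information traces, the independence of $S \cup T$ allows inversion of the trace pairing and recovers $\fas$. The only real subtlety in the whole argument is keeping track of the nonzero multipliers $\vlam$ from the GRS dual, which is a harmless invertible rescaling that does not affect the stated independence condition; apart from replacing the classical Full-Rank Condition on a size-$\ell$ label set by independence of the combined set $S \cup T$, the proof parallels \cite[Thm.~4]{GuruswamiWootters2017}.
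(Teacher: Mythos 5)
Your proposal takes essentially the same route as the paper's proof in Appendix~\ref{app:linear_scheme}, which is itself modeled on \cite[Thm.~4]{GuruswamiWootters2017}: represent every downloaded sub-symbol and every piece of side information as a trace $\tr(\bga\, f(\bal))$, pass from trace identities valid for all $f$ with $\deg(f)<k$ to the corresponding $\fql$-identities, and identify the resulting coefficient vectors with dual GRS codewords, i.e., check polynomials of degree less than $n-k$, with the Full-Rank Condition replaced by linear independence of $S\cup T$. Two small points to tighten: in your forward direction the entry of the dual codeword at position $j^*$ must be $-\bbe_i$ (not zero) for the vector to be orthogonal to $\C$ and to yield $g_i(\bals)=\bbe_i$ after absorbing the multiplier $\lambda_{j^*}$, and the passage from $\tr\big(\bbe_i \fas\big)=\sum_{j\neq j^*}\tr\big(\bga_{i,j}\faj\big)$ for all $f$ to the field identity $\bbe_i \fas=\sum_{j\neq j^*}\bga_{i,j}\faj$ requires replacing $f$ by $\bga f$ for arbitrary $\bga\in\fql$ (this is exactly what the paper's Lemma~\ref{lem:scheme_queries_coefficients} does), not merely non-degeneracy of the trace pairing.
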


\subsection{Optimal Repair Bandwidths Only Depend on the Side Information Set Size}

We demonstrate below that the optimal repair bandwidth for recovering an erasure depends on $s=|S|$ but not on the specific choice of $S$. We first need an auxiliary lemma.

\begin{lemma} \label{lem:basis_completion}
Given two $\fq$-subspaces $\S$ and $\T'$ of $\fql$ of dimensions $s$ and $t=\ell-s$, respectively. Then there exists an element $\bde \in \fqls$ such that $\S\oplus \T=\fql$, where $\T = \bde \T'$.
Equivalently, given two $\fq$-linearly independent subsets $S=\{\bbe_i\}_{i\in [s]}$ and $T' = \{\bbe'_i\}_{i\in [s+1,\ell]}$ of $\fql$, where $s \in [\ell]$, there exists $\bde\in \fqls$ such that $S\cup T$, where $T = \bde T'$, forms an $\fq$-basis of $\fql$.
\end{lemma}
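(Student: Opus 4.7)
The plan is to reduce the claim to a simple counting argument in $\fqls$. I would first observe that for any $\bde \in \fqls$, multiplication by $\bde$ is an $\fq$-linear bijection of $\fql$, so $\bde\T'$ has $\fq$-dimension $t = \ell - s$ and hence $\dim_{\fq}(\S) + \dim_{\fq}(\bde\T') = \ell$. Consequently $\S \oplus \bde\T' = \fql$ is equivalent to the single condition $\S \cap \bde\T' = \{0\}$, so it suffices to exhibit one $\bde \in \fqls$ for which this intersection is trivial.

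I would then call $\bde \in \fqls$ \emph{bad} if $\S \cap \bde\T' \neq \{0\}$. Such a $\bde$ must satisfy $u = \bde v$ for some $u \in \S^*$ and $v \in (\T')^*$, so the bad set is contained in the image of the ratio map $\S^* \times (\T')^* \to \fqls$, $(u,v) \mapsto u v^{-1}$; in particular its cardinality is at most $(q^s - 1)(q^t - 1)$. Comparing with $|\fqls| = q^\ell - 1$ yields $q^\ell - 1 - (q^s - 1)(q^t - 1) = q^s + q^t - 2$, which is strictly positive whenever $s, t \geq 1$, guaranteeing at least one good $\bde$. The boundary cases $s = 0$ or $t = 0$ are handled separately and are trivial, since then one of $\S, \T'$ already equals $\fql$ while the other is $\{0\}$, and any $\bde \in \fqls$ works.

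To finish, I would transfer the conclusion to the equivalent set-theoretic formulation by applying the subspace version to $\S = \spn(S)$ and $\T' = \spn(T')$: the chosen $\bde$ makes $S \cup \bde T'$ an $\ell$-element spanning set of $\fql$, hence an $\fq$-basis. I do not foresee any real obstacle; the only mild subtlety is that the ratio map $(u,v) \mapsto u v^{-1}$ need not be injective, but this is immaterial because the crude image-size bound $(q^s - 1)(q^t - 1)$ already falls strictly below $q^\ell - 1$.
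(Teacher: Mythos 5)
Your proof is correct and follows essentially the same route as the paper: both arguments count the ``bad'' multipliers $\bde$ for which $\S\cap\bde\T'\neq\{0\}$, bound their number by $(q^s-1)(q^{\ell-s}-1)<q^\ell-1$ (you via the image of the ratio map $(u,v)\mapsto uv^{-1}$, the paper via a union bound over $\bga\in\T'\setminus\{0\}$), and conclude a good $\bde$ exists. The only cosmetic difference is your explicit handling of the degenerate cases $s=0$ or $t=0$, which the paper's bound also covers implicitly.
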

\begin{proof} 
For each $\bga \in \T'\setminus \{0\}$, as $\{\bde \bga \colon \bde \in \fqls\}=\fqls$, there are exactly $q^s-1$ such $\bde$ so that $\bde \bga \in \S\setminus \{0\}$ (as $|\S|=q^s$). 
Let $U \triangleq \{\bde\in \fqls\colon \exists \bga \in \T'\setminus \{0\} \text{ such that } \bde \bga \in S\}$, 
then 
$U = \bigcup_{\bga \in \T'^*} \{\bde \in \fqls\colon \bde \bga \in \S\}.$ 
We have

\[
\begin{split}
|U| &\leq \sum_{\bga \in \T'^*}|\{\bde \in \fqls \colon \bde \bga \in \S\}|\\
&= \sum_{\bga \in \T'^*} (q^s-1) = (q^{\ell-s}-1)(q^s-1)\\
&=(q^\ell-1)-(q^{\ell-s}+q^s-2) < q^\ell-1,
\end{split}
\]
for $s \in [\ell]$. Hence, there exists a $\bde \notin U$, $\bde \neq 0$, satisfying that for every $\bga \in \T'^*, \bde \bga \notin \S$. Thus, $\S \cap \bde \T' = \{0\}$ or $S \cup \bde T'$ forms a basis of $\fql$ as desired.
\end{proof} 

\begin{proposition}\label{pro:opt_bw_depends_on_size_only}
The minimum repair bandwidth for a codeword symbol 
of an RS$(n,k)$ over $\fql$ given the side information $S$ depends on $|S|$ but not on the specific choice of its elements.  
\end{proposition}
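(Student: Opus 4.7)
The plan is to show that for any two side information sets $S_1$ and $S_2$ of the same size $s$, any linear repair scheme for $S_1$ can be transformed into a linear repair scheme for $S_2$ with identical repair bandwidth. By symmetry this forces the minimum bandwidths to coincide, so the optimal value depends only on $s=|S|$.

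Concretely, let $\{g_i(x)\}_{i\in[s+1,\ell]}$ be any linear repair scheme for $f(\bals)$ with side information $S_1$. By Proposition~\ref{pro:scheme_corresponds_polys}, each $g_i$ has degree less than $n-k$, and the set $T_1 \define \{g_i(\bals)\}_{i\in[s+1,\ell]}$ satisfies that $S_1 \cup T_1$ is $\fq$-linearly independent, i.e.\ forms an $\fq$-basis of $\fql$. Viewing $\T_1 \define \spn(T_1)$ as a subspace of dimension $\ell-s$ complementary to $\S_1 \define \spn(S_1)$, I would invoke Lemma~\ref{lem:basis_completion} with the pair $(\S_2, \T_1)$: since $\dim(\S_2)+\dim(\T_1)=\ell$, there exists $\bde\in\fqls$ such that $S_2 \cup \bde T_1$ is an $\fq$-basis of $\fql$.

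Now define $g'_i(x) \define \bde\, g_i(x)$ for $i\in[s+1,\ell]$. Then $\deg(g'_i)=\deg(g_i)<n-k$, and $\{g'_i(\bals)\}_{i\in[s+1,\ell]} = \bde T_1$, so by the construction above $S_2 \cup \{g'_i(\bals)\}$ is linearly independent. Hence $\{g'_i(x)\}$ is a valid linear repair scheme for $f(\bals)$ with side information $S_2$, again by Proposition~\ref{pro:scheme_corresponds_polys}. Because $\bde\neq 0$, multiplication by $\bde$ is an $\fq$-linear bijection of $\fql$, and therefore for every $\bal\in A\setminus\{\bals\}$,
\[
\rank\bigl(\{g'_i(\bal)\}_{i\in[s+1,\ell]}\bigr) = \rank\bigl(\{\bde g_i(\bal)\}_{i\in[s+1,\ell]}\bigr) = \rank\bigl(\{g_i(\bal)\}_{i\in[s+1,\ell]}\bigr).
\]
Summing over $\bal\in A\setminus\{\bals\}$ shows the two schemes have the same bandwidth. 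Thus the optimal bandwidth with side information $S_2$ is at most the optimal bandwidth with side information $S_1$; swapping their roles yields equality.

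The only nontrivial step is securing the existence of the scalar $\bde$ that simultaneously preserves the bandwidth (being a nonzero element of $\fql$) and restores the full-rank/basis condition against a prescribed $\S_2$, and this is exactly what Lemma~\ref{lem:basis_completion} provides. Everything else is routine: degree preservation under scalar multiplication, $\fq$-rank invariance under multiplication by a nonzero element of $\fql$, and the symmetry argument between $S_1$ and $S_2$.
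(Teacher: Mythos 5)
Your proposal is correct and follows essentially the same route as the paper: take a scheme for one side information set, use Lemma~\ref{lem:basis_completion} to find a nonzero $\bde$ restoring the basis condition against the other set, and multiply the check polynomials by $\bde$. Your version is slightly more explicit than the paper's in justifying that the bandwidth is preserved (via $\fq$-rank invariance under multiplication by a nonzero element of $\fql$), but the argument is the same.
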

\begin{proof} 
Let $S = \{\bbe_i\}_{i\in [s]}$ and $S' = \{\bbe'_i\}_{i\in [s]}$ be two different sets of side information for repairing the same codeword symbol $f(\bals)$. It suffices to show that for every repair scheme for $f(\bals)$ with side information $S'$, there exists a repair scheme for $f(\bals)$ with side information $S$ achieving the same repair bandwidth. 
To this end, let $\{g_i(x)\}_{i \in [s+1,\ell]}\subset \fql[x]$ corresponds to the repair scheme with side information $S'$, i.e. $g_i(\bals)=\bbe'_i$, $i\in [s+1,\ell]$ and $\{\bbe'_i\}_{i\in [\ell]}$ form an $\fq$-basis of $\fql$. 
According to Lemma~\ref{lem:basis_completion}, there exists $\bde \in \fql$ such that $\{\bbe_1,\ldots,\bbe_s,\bde\bbe'_{s+1},\ldots,\bde\bbe'_\ell\}$ is linearly independent. Therefore, the polynomials $h_i(x)\triangleq \bde g_i(x)$ for all $i\in [s+1,\ell]$ form a repair scheme for $f(\bals)$ with side information $S$ and with the same repair bandwidth as $g_i(x)$'s.
\end{proof}

\subsection{A Lower Bound on the Bandwidth with Side Information}

We provide a lower bound for the repair bandwidth with side information for one erasure in a Reed-Solomon code. The lower bound is similar to those in \cite{GuruswamiWootters2016,GuruswamiWootters2017,DauMilenkovic2017,DauDinhKiahTranMilenkovic2021}, replacing $q^\ell$ by $q^{\ell-s}$ at some places. 
When $s=0$, it reduces to the existing bound (no side information).
Its proof can be found in Appendix~\ref{app:lower_bound}.

\begin{proposition}
    \label{prop:lower_bound}
    Any linear repair scheme with side information size $s$ for Reed-Solomon code RS$(A,k)$ over $\fql$ requires a repair bandwidth of at least
    \[
    t\lfloor \bave\rfloor + (n-1-t)\lceil \bave\rceil
    \]
    sub-symbols over $\fq$, where $n = |A| \leq q^\ell$, $r = n-k$, $\bave$ and $t$ are defined as 
    $\bave \define \log_q(\frac{n-1}{T})$,
    $T \define \frac{(r - 1)(q^{\ell - s} - 1) + n - 1}{q^{\ell - s}}$,
    and
    $t \define n - 1$ if $\bave \in \bbZ$, 
    $t \define \big\lfloor\frac{T - (n - 1)q^{-\lceil \bave\rceil}}{q^{-\lfloor \bave\rfloor} - q^{-\lceil \bave \rceil}}\big\rfloor$ otherwise.
\end{proposition}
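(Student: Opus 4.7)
The plan is to adapt the Guruswami--Wootters counting lower bound to the side-information setting, where the only substantive change is that the effective number of check polynomials drops from $\ell$ to $\ell - s$, causing $q^{\ell}$ to be replaced by $q^{\ell - s}$ throughout the bookkeeping. By Proposition~\ref{pro:scheme_corresponds_polys}, a linear repair scheme with side information $S$ of size $s$ corresponds to a collection of $\ell - s$ check polynomials $\{g_i(x)\}_{i \in [s+1,\ell]} \subset \fql[x]$, each of degree less than $r = n-k$, such that $S \cup \{g_i(\bals)\}_{i \in [s+1,\ell]}$ is $\fq$-linearly independent. In particular, $g_{s+1}(\bals), \ldots, g_{\ell}(\bals)$ are $\fq$-linearly independent. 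Let $V$ denote the $\fq$-span of $\{g_i\}$ inside $\fql[x]$; then $\dim_{\fq}(V) = \ell - s$ and $|V| = q^{\ell - s}$. Define $b_\bal \define \dim_{\fq}\{g(\bal) : g \in V\}$ for each $\bal \in A$; then $b_{\bals} = \ell - s$, and the repair bandwidth equals $\bw = \sum_{\bal \in A \setminus \{\bals\}} b_\bal$.

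Next I would apply the degree bound by double-counting pairs $(g, \bal) \in V \times A$ with $g(\bal) = 0$. On one side, the evaluation map $\mathsf{ev}_\bal \colon V \to \fql$ has kernel of size $q^{\ell - s - b_\bal}$; on the other, each nonzero $g \in V$ has at most $r - 1$ roots in $A$ (since $\deg g < r$), while the zero polynomial vanishes on all $n$ points. This yields
\[
\sum_{\bal \in A} q^{\ell - s - b_\bal} \;\leq\; n + (q^{\ell - s} - 1)(r - 1).
\]
Peeling off the $\bal = \bals$ term (which equals $1$) and dividing both sides by $q^{\ell - s}$ gives $\sum_{\bal \neq \bals} q^{-b_\bal} \leq T$, with $T$ exactly as in the statement.

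Finally, I would minimize $\sum_{\bal \neq \bals} b_\bal$ over tuples of nonnegative integers satisfying this inequality. Strict convexity of $x \mapsto q^{-x}$ forces a minimizer to have all coordinates in $\{\lfloor \bave \rfloor, \lceil \bave \rceil\}$: if two coordinates differed by $2$ or more, the balancing swap $(b_1, b_2) \to (b_1 + 1, b_2 - 1)$ preserves the objective but strictly decreases $\sum q^{-b_\bal}$, so some other coordinate could then be decremented to reduce the objective, contradicting minimality. Writing $t$ for the number of coordinates equal to $\lfloor \bave \rfloor$ and substituting into the constraint pins down the largest admissible $t$ as the floor expression in the statement; the case $\bave \in \bbZ$ collapses $\lfloor \bave \rfloor = \lceil \bave \rceil$ and is handled by $t = n-1$. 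The main obstacle is this integer-optimization step: formalizing the swap argument and extracting the precise value of $t$ without overshooting the constraint, while the counting portion closely mirrors the established bounds in~\cite{GuruswamiWootters2016,DauDinhKiahTranMilenkovic2021}, which recover the special case $s = 0$.
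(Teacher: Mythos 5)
Your proposal follows essentially the same route as the paper's proof: Proposition~\ref{pro:scheme_corresponds_polys} reduces the scheme to $\ell-s$ check polynomials whose values at $\bals$ are $\fq$-independent, the degree bound yields $\sum_{\bal\neq\bals}q^{-b_{\bal}}\le T$ (your double count over $V\times A$ is the paper's averaging argument over nonzero coefficient vectors $\vec{e}$, merely rearranged), and the bound is then extracted by the same integer balancing step that the paper delegates to the cited procedure. The one slip is your justification that a minimizer of $\sum b_{\bal}$ must be balanced: after a swap $(b_1,b_2)\to(b_1+1,b_2-1)$ with $b_2-b_1=d\ge 2$, the slack created in the constraint is $q^{-b_1}(1-q^{-1})(1-q^{-d+1})$, which need not cover the cost $(q-1)q^{-b_j}$ of decrementing another coordinate by a full integer step (e.g.\ $q=2$ and coordinates $(a,a+2)$ give slack $2^{-a-2}$ versus cost $2^{-a-1}$), so the intended contradiction does not materialize; moreover a minimizer need not be unique, so an unbalanced minimizer can coexist with a balanced one. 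The standard repair --- and what the balancing procedure of the reference actually does --- is to run the exchange in the opposite direction: among all integer tuples with a fixed sum $B$, balancing swaps only decrease $\sum q^{-b_{\bal}}$, so the balanced tuple minimizes the constraint functional over that sum level; since the balanced tuple with sum one less than $t\lfloor\bave\rfloor+(n-1-t)\lceil\bave\rceil$ already violates $\sum q^{-b_{\bal}}\le T$ by the maximality of $t$, no feasible tuple can have a smaller sum.
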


In some special cases, the lower bound in Proposition~\ref{prop:lower_bound} can be explicitly computed.

\begin{corollary}
    \label{cr:lower_bound_special_case}
    Consider a full-length RS$(n=q^\ell,k)$ code over $\fql$ with $n-k = q^m$ for some $1\leq m<\ell$. Assume that $(\ell-s) \mid \ell$ and $\ell\geq m(\ell-s)$. Then every linear repair scheme with side information set size $s$ 
    requires a repair bandwidth of at least $(q^\ell-1)(\ell-s)-\frac{(q^{\ell-s}-1)(q^m-1)}{q-1}$ sub-symbols in $\fq$.
\end{corollary}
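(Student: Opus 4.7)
The plan is to invoke Proposition~\ref{prop:lower_bound} directly with $n = q^\ell$, $r = q^m$, and the abbreviation $d \define \ell - s$. A quick calculation yields
$T = \frac{(q^m - 1)(q^d - 1) + (q^\ell - 1)}{q^d}$
and, equivalently, $q^d T - (q^\ell - 1) = (q^m - 1)(q^d - 1)$; both identities will be used when simplifying the quantities in Proposition~\ref{prop:lower_bound}.

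The technical core of the argument is to pin down $\bave = \log_q\!\big((q^\ell - 1)/T\big)$ inside the interval $[d - 1,\,d)$. The upper bound $\bave < d$ is free, since $(q^m - 1)(q^d - 1) > 0$. The lower bound $\bave \ge d - 1$ rearranges to $\frac{q^m - 1}{q - 1} \le \frac{q^\ell - 1}{q^d - 1}$, and this is the one place where both hypotheses enter. Using $(\ell - s) \mid \ell$, i.e., $d \mid \ell$, the right-hand side expands as the geometric sum $1 + q^d + q^{2d} + \cdots + q^{\ell - d}$ of length $\ell/d$, whereas the left-hand side is $1 + q + \cdots + q^{m - 1}$ of length $m$. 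The bound $\ell \ge md$ forces $m \le \ell/d$, so the right-hand sum has at least as many non-negative terms as the left-hand one, and the pairwise comparison $q^{i-1} \le q^{(i-1)d}$ for $i = 1, \ldots, m$ (valid because $d \ge 1$) completes the inequality. I expect this term-by-term comparison, together with cleanly isolating the roles of the two hypotheses, to be the main obstacle, although each individual check is short.

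With $\lfloor \bave \rfloor = d - 1$ and $\lceil \bave \rceil = d$ in hand, the formula for $t$ in Proposition~\ref{prop:lower_bound} collapses to $t = \big\lfloor (q^d T - (q^\ell - 1))/(q - 1)\big\rfloor = (q^m - 1)(q^d - 1)/(q - 1)$, which is already an integer since $(q^m - 1)/(q - 1) = 1 + q + \cdots + q^{m - 1}$, so the floor is inert. Substituting into the bound $t(d - 1) + (q^\ell - 1 - t)\, d = (q^\ell - 1) d - t$ and restoring $d = \ell - s$ produces exactly $(q^\ell - 1)(\ell - s) - (q^{\ell - s} - 1)(q^m - 1)/(q - 1)$, as required. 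The boundary case $\bave = d - 1 \in \bbZ$ (which corresponds to equality in the above inequality) falls under the $\bave \in \bbZ$ branch of Proposition~\ref{prop:lower_bound}, where the bound is $(q^\ell - 1)(d - 1)$; a quick check confirms that this equals the same final expression, so no separate treatment is needed.
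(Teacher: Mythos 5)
Your proof is correct and follows essentially the same route as the paper: direct substitution into Proposition~\ref{prop:lower_bound}, pinning $\bave$ between $\ell-s-1$ and $\ell-s$, and then evaluating $t$ and the resulting bound. The only cosmetic differences are that you compare the two geometric sums from the other side (expanding $(q^\ell-1)/(q^{\ell-s}-1)$ using $(\ell-s)\mid\ell$, where the paper expands $(q^{m(\ell-s)}-1)/(q^m-1)$ using $\ell\ge m(\ell-s)$), and that you dispatch the potential boundary case $\bave=\ell-s-1\in\bbZ$ by verifying it yields the same final expression, whereas the paper rules it out by showing the inequality is strict --- both are valid.
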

\begin{proof}
    With $n=q^\ell$ and $n-k = q^m$, we have
    \[
    T = \dfrac{(q^m-1)(q^{\ell-s}-1)+q^\ell-1}{q^{\ell-s}} \vspace{-5pt}
    \]
    and \vspace{-5pt}
    \[
    \bave\hspace{-2pt} = \hspace{-2pt}\log_q\dfrac{q^\ell-1}{T} \hspace{-2pt}=\hspace{-2pt} \log_q\hspace{-2pt}\bigg(\hspace{-2pt}\dfrac{q^\ell-1}{(q^m-1)(q^{\ell-s}-1)+q^\ell-1}q^{\ell-s}\hspace{-2pt}\bigg).
    \]
    
    Next, we show that $\ell\hspace{-2pt}-\hspace{-2pt}s\hspace{-2pt}-\hspace{-2pt}1 \hspace{-2pt}<\hspace{-2pt} \bave \hspace{-2pt}<\hspace{-2pt} \ell\hspace{-2pt}-\hspace{-2pt}s$. Indeed, the second inequality is obvious because $q^\ell\hspace{-2pt}-\hspace{-2pt}1 \hspace{-2pt}<\hspace{-2pt} (q^m\hspace{-2pt}-\hspace{-2pt}1)(q^{\ell-s}\hspace{-2pt}-\hspace{-2pt}1)\hspace{-2pt}+\hspace{-2pt}q^\ell\hspace{-2pt}-\hspace{-2pt}1$. For the first inequality, we need to show that \vspace{-5pt}
    \[
    \dfrac{q^\ell-1}{(q^m-1)(q^{\ell-s}-1)+q^\ell-1} > \dfrac{1}{q},
    \]
    which is equivalent to \vspace{-10pt}
    \[
    (q^\ell-1)(q-1) > (q^m-1)(q^{\ell-s}-1) \Longleftrightarrow \dfrac{q^\ell-1}{q^m-1}> \dfrac{q^{\ell-s}-1}{q-1},\vspace{-5pt}
    \]
    which is true because\vspace{-5pt}
    \[
    \dfrac{q^\ell-1}{q^m-1} \geq \dfrac{q^{m(\ell-s)}-1}{q^m-1} = \sum_{i=0}^{\ell-s-1} q^{mi} \geq \sum_{i=0}^{\ell-s-1} q^{i} = \dfrac{q^{\ell-s}-1}{q-1},  
    \]
    noting that either the first or the second inequality must be strict: if $m=1$ (so that the second inequality becomes equality) then the first inequality is strict since $q^{\ell} > q^{\ell-s} = q^{m(\ell-s)}$. Thus, $\bave \notin \bbZ$ and $\lfloor \bave \rfloor = \ell-s-1$ and $\lceil \bave \rceil = \ell-s$. Plugging this in the formula for $t$ in Proposition~\ref{prop:lower_bound} we obtain\vspace{-3pt}
    \[
    t = \left\lfloor\frac{T - (q^\ell - 1)q^{-\lceil \bave\rceil}}{q^{-\lfloor \bave\rfloor} - q^{-\lceil \bave \rceil}}\right\rfloor = \dfrac{(q^m-1)(q^{\ell-s}-1)}{q-1}.
    \]

    Finally, using Proposition \ref{prop:lower_bound} we obtain a lower bound of\vspace{-5pt}
    \begin{align*}
    t(\ell-s-1) + (q^\ell-1-t)(\ell-s) = (q^\ell-1)(\ell-s) - t \\
    = (q^\ell-1)(\ell-s)-(q^{\ell-s}-1)(q^m-1)/(q-1)
    \end{align*}
    sub-symbols over $\fq$ on the repair bandwidth as claimed.
\end{proof}

\subsection{Optimal Subspace-Polynomial-Based Repair Schemes}

In this section we investigate the repair bandwidth incurred by the subspace-polynomial repair scheme introduced in~\cite{DauMilenkovic2017, DauDinhKiahTranMilenkovic2021}, which generalizes the trace-polynomial-based scheme in~\cite{GuruswamiWootters2016, GuruswamiWootters2017}, under the new assumption of side information.
We show that in contrast to the standard repair problem (with no side information), the repair bandwidth of such a scheme depends on the specific choice of the subspace.
In particular, we transform the problem of finding subspace-polynomial repair schemes with minimum bandwidths possible into another one on subspace intersection, which on its own is an intriguing problem.

Before presenting Theorem~\ref{thm:subspace_intersection}, we note that given side information set $S = \{\bbe_i\}_{i\in [s]}$, to construct a subspace-polynomial repair scheme, one first needs to find a target set $T=\{\bbe_i\}_{i \in [s+1,\ell]}$ such that $S\cup T$ forms an $\fq$-basis of $\fql$ (see Proposition~\ref{pro:scheme_corresponds_polys} and the discussion preceding it). Next, given that $r = n -k \geq q^m$, for some $m< \ell$, one picks an $m$-dimensional subspace $\W$ of $\fql$, and form the $\ell-s$ check polynomials $g_i(x)\define \frac{\LW\big(\bbe_i (x-\bals)\big)}{x-\bals}$, $i\in [s+1,\ell]$. Note that $\LW(x) \define \prod_{\bom \in \W} (x-\bom)$ is the subspace polynomial which is a linearly map constructed over $\W$ and $\ker(\LW) = \W$.
The check polynomials are then used in the trace repair scheme. 

\begin{lemma}
    \label{lem:bandwidth_intersection}
    Consider an RS$(n,k)$ with evaluation points $A \subseteq \fql$ satisfying $n-k\geq q^m$, $m<\ell$. Consider also a repair scheme with side information of size $s$, that consists of $\ell-s$ polynomials $\left\{\gix\right\}_{i\in [s+1,\ell]}$, where $\gix\triangleq\LW\big(\bbe_i (x-\bals)\big)/(x-\bals)$ and $T\triangleq\{\bbe_i\}_{i\in [s+1,\ell]}$ is a target set. This scheme has bandwidth (with $|A|-1$ helper nodes) 
    \[
    (|A|-1)(\ell-s)-\textstyle\sum_{\bal \in A \setminus \{\bals\}}\dim\big((\bal-\bals)\T\cap\W\big)
    \]
    sub-symbols in $\fq$, where $\T\triangleq \spn(T)$. 
\end{lemma}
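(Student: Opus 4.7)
The plan is to start from the bandwidth formula from the trace repair method in Section~\ref{subsec:tracerepair}, namely
\[
\bw = \sum_{\bal \in A \setminus \{\bals\}} \rank\!\big(\{\gi(\bal)\}_{i\in [s+1,\ell]}\big),
\]
and to show that for each fixed $\bal \neq \bals$, the individual rank equals $(\ell-s) - \dim\big((\bal-\bals)\T \cap \W\big)$. Summing then yields the claimed expression, because $|A\setminus\{\bals\}| = |A|-1$ and each term contributes $\ell-s$ minus the intersection dimension.

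First I would fix $\bal \in A\setminus\{\bals\}$, set $y \define \bal - \bals \in \fqls$, and substitute $\gi(\bal) = \LW(y\bbe_i)/y$. Since scaling by $y^{-1} \in \fqls$ is an $\fq$-linear bijection,
\[
\spn\big(\{\gi(\bal)\}_{i\in [s+1,\ell]}\big) = y^{-1}\,\LW\big(\spn(\{y\bbe_i\}_{i\in [s+1,\ell]})\big) = y^{-1}\,\LW(y\T),
\]
so the rank in question equals $\dim \LW(y\T)$.

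Next I would invoke the two standard facts about subspace polynomials: $\LW$ is an $\fq$-linear map on $\fql$, and its kernel is exactly $\W$ (since $\LW(x)=\prod_{\bom\in\W}(x-\bom)$ vanishes precisely on $\W$, with simple roots). Applying the rank-nullity theorem to the restriction of $\LW$ to the $\fq$-subspace $y\T$ gives
\[
\dim \LW(y\T) = \dim(y\T) - \dim\big(y\T \cap \ker \LW\big) = (\ell-s) - \dim\big(y\T \cap \W\big),
\]
where I used that multiplication by $y\neq 0$ preserves dimension, so $\dim(y\T) = \dim \T = \ell-s$ (as $T=\{\bbe_i\}_{i\in [s+1,\ell]}$ is $\fq$-linearly independent by Proposition~\ref{pro:scheme_corresponds_polys}).

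Finally I would substitute back $y = \bal-\bals$ and sum over $\bal \in A\setminus\{\bals\}$ to recover the displayed bandwidth. There is no real obstacle here: the argument is a direct unfolding of the trace-repair bandwidth formula combined with the linearity and kernel of the subspace polynomial. The only small point to be careful about is checking that the $\fq$-linear bijection $y^{-1}(\cdot)$ is applied \emph{after} $\LW$ (which is itself only $\fq$-linear, not $\fql$-linear), so that the two linearities compose correctly and no $\fql$-scaling is improperly invoked.
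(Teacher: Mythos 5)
Your proof is correct and follows essentially the same route as the paper's: both reduce the per-helper contribution to the rank of $\{\LW(\bbe_i(\bal-\bals))\}_{i\in[s+1,\ell]}$ and then apply rank--nullity to the restriction of $\LW$ to the subspace $(\bal-\bals)\T$, whose kernel is $(\bal-\bals)\T\cap\W$. If anything, your treatment is slightly more careful than the paper's in making explicit that dividing by $y=\bal-\bals$ is an $\fq$-linear bijection and hence does not affect the rank.
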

\begin{proof}
    The node storing $\fa$ computes $\ell-s$ traces $\tr\Big(\frac{\LW\big(\bbe_i (\bal-\bals)\big)}{\bal-\bals}\fa\Big)$, $i\in [s+1,\ell]$. However, due to the linearity of trace, it only needs to send $\rank\big(\left\{\LW\big(\bbe_i (\bal-\bals)\big)\right\}_{i\in [s,\ell+1]}\big)$ traces. To compute this rank, let $\U\triangleq (\bal-\bals)\T = \spn\big((\bal-\bals)T\big)$ and $\tau \colon \U \to \fql$ defined as $\tau(u)=\LW(u)$ for every $u\in \U$.
    Then $\dim(\U) = \dim(\T)=\ell-s$ and $\ker(\tau) = \U \cap \ker(\LW) = \U \cap \W$. Using the rank-nullity theorem, we obtain
    \[
    \begin{split}
    &\rank\big(\left\{\LW\big(\bbe_i (\bal-\bals)\big)\right\}_{i\in [s,\ell+1]}\big)\\ 
    &= \dim\big(\im(\tau)\big) = \dim(\U) - \dim\big(\ker(\tau)\big) \\
    &=(\ell-s) - \dim\big((\bal-\bals)\T\cap \W\big).
    \end{split}
    \]
    Summing this over all $\bal\in A\setminus \{\bals\}$ completes the lemma. 
\end{proof}

\textcolor{black}{Given the side information} subspace $\S$, when constructing a subspace-polynomial repair scheme, we have the freedom in selecting relevant target subspace $\T$ and $\W$. Hence, one can optimize the bandwidth over such $\T$ and $\W$. This is in stark contrast to the case of standard repair \textit{without} side information, in which any subspace $\W$ would lead to the same repair bandwidth~\cite{DauMilenkovic2017,DauDinhKiahTranMilenkovic2021}. Theorem~\ref{thm:subspace_intersection} formalizes this fact. 

\begin{theorem}
\label{thm:subspace_intersection} 
Consider an RS$(A,k)$, $A\subseteq \fql$, $|A| = n$, and $n-k\geq q^m$, $m<\ell$. Then the minimum bandwidth that a subspace-polynomial repair scheme for $f(\bals)$ ($\bals \in A$) can achieve, given the side information subspace $\S$, $\dim(\S)\hspace{-2pt}=\hspace{-2pt}s$, is\vspace{-5pt} 
\[
(|A|-1)(\ell-s) - \textstyle\max_{\T, \W} \sum_{\bal \in A\setminus\{\bals\}}\dim\big((\bal-\bals)\T\cap\W\big),
\]
where the $\max$ is taken over all $\fq$-subspaces $\T$ and $\W$ of $\fql$ with $\dim(\T)=\ell-s$, $\S\oplus \T=\fql$, and $\dim(\W)=m$.
\end{theorem}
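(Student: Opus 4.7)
The plan is to derive the theorem as an optimization recast of Lemma~\ref{lem:bandwidth_intersection}. The key observation is that each subspace-polynomial repair scheme with side information $\S$ is determined, up to data that does not affect the bandwidth, by a pair $(\T, \W)$ of $\fq$-subspaces of $\fql$ satisfying exactly the constraints appearing in the $\max$; then taking the minimum bandwidth is equivalent to taking the maximum of the subspace-intersection sum.

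First, I would recall from the discussion preceding Lemma~\ref{lem:bandwidth_intersection} how a subspace-polynomial scheme is built. Given the side information set $S = \{\bbe_i\}_{i \in [s]}$, one picks (i) an $m$-dimensional $\fq$-subspace $\W \subseteq \fql$ (admissible since the hypothesis $q^m \leq n-k$ guarantees $\deg g_i < n-k$), and (ii) a target set $T = \{\bbe_i\}_{i \in [s+1,\ell]}$ such that $S \cup T$ is an $\fq$-basis of $\fql$; the check polynomials are $g_i(x) = L_\W(\bbe_i(x-\bals))/(x-\bals)$. By Proposition~\ref{pro:scheme_corresponds_polys}, this exhausts the subspace-polynomial schemes that validly repair $f(\bals)$ with side information $S$, and the requirement that $S \cup T$ be a basis of $\fql$ is equivalent to $\S \oplus \T = \fql$, where $\T \triangleq \spn(T)$ has dimension $\ell - s$.

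Next, I would invoke Lemma~\ref{lem:bandwidth_intersection}, which computes the bandwidth of such a scheme as $(|A|-1)(\ell-s) - \sum_{\bal \in A \setminus \{\bals\}}\dim\bigl((\bal-\bals)\T \cap \W\bigr)$. Since $(\bal-\bals)\T$ is an $\fq$-subspace determined by the subspace $\T$ alone, this quantity depends on $T$ only through $\T$, not on the specific basis chosen. Conversely, any pair $(\T, \W)$ with $\dim(\T) = \ell - s$, $\S \oplus \T = \fql$, and $\dim(\W) = m$ arises from at least one scheme: pick any $\fq$-basis $T$ of $\T$; then $S \cup T$ is a basis of $\fql$, and the construction above produces a subspace-polynomial scheme whose bandwidth matches the Lemma's formula.

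Therefore, minimizing the bandwidth over all subspace-polynomial schemes is the same as minimizing the displayed expression over admissible pairs $(\T, \W)$, which in turn is $(|A|-1)(\ell-s)$ minus the maximum of $\sum_{\bal \in A \setminus \{\bals\}}\dim\bigl((\bal-\bals)\T \cap \W\bigr)$ over those pairs, proving the claim. I do not anticipate a serious obstacle: the theorem is essentially a repackaging of Lemma~\ref{lem:bandwidth_intersection} as a subspace-intersection optimization, and the only points requiring care are the translation between ``$S \cup T$ is $\fq$-linearly independent'' and ``$\S \oplus \T = \fql$'' and the observation that the bandwidth formula is basis-free in $T$, both of which are immediate.
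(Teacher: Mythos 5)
Your proposal is correct and follows exactly the paper's route: the paper's own proof is the one-line "follows directly from Proposition~\ref{pro:scheme_corresponds_polys} and Lemma~\ref{lem:bandwidth_intersection}," and you have simply spelled out the bijection between admissible pairs $(\T,\W)$ and subspace-polynomial schemes, the basis-independence of the bandwidth formula, and the equivalence of "$S\cup T$ is a basis" with $\S\oplus\T=\fql$. Nothing further is needed.
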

\begin{proof}
Follows directly from Proposition~\ref{pro:scheme_corresponds_polys} and Lemma~\ref{lem:bandwidth_intersection}.
\end{proof} 

Note that Theorem~\ref{thm:subspace_intersection} converts the repair bandwidth with side information problem restricted to subspace-polynomial repair scheme to a pure subspace-intersection problem stated below.

\textbf{(Subspace-Intersection Problem)} Given $\bals\in A \subseteq \fql$ and an $s$-dimensional subspace $\S$ of $\fql$, find $\T$ and $\W$ that maximizes the sum $\sum_{\bal \in A\setminus\{\bals\}}\dim\big((\bal-\bals)\T\cap\W\big)$ among all $(\ell-s)$-dimensional subspaces $\T$ and $m$-dimensional subspaces $\W$ satisfying $\S\oplus\T=\fql$.

The subspace-intersection problem can be tricky to solve, especially for general $A$. Therefore, we limit ourselves to the more tractable case when $A\equiv \fql$, for which optimal repair bandwidths were known for the standard repair setting (without side information) when $q^\ell-k\geq q^m$~\cite{DauMilenkovic2017,DauDinhKiahTranMilenkovic2021}. 
We also assume that $(\ell-s)\mid \ell$. 
With these assumptions, in Corollary~\ref{cr:full_length}, we can replace $\T$ by $\fqlms$ in the optimization problem. 
Note that while $\fqlms$ may not be a valid $\T$ (i.e $\fqlms \oplus \S \neq \fql$), by Lemma~\ref{lem:basis_completion}, at least one of its cosets is.   
Finally, although this corollary provides an \textit{upper bound} instead of an exact formulation for the bandwidth, later on, using the lower bound in Proposition~\ref{prop:lower_bound}, we can establish optimal repair bandwidths for subspace polynomial schemes in some parameter regimes. 

\begin{corollary}
    \label{cr:full_length}
    Consider a full-length RS$(n=q^\ell,k)$ with evaluation points $A = \fql$, where $n-k\geq q^m$, $m<\ell$. Let $S$ be a side information set with $s=|S|$ and $(\ell-s)\mid \ell$. Then there exists a subspace-polynomial repair scheme for $f(\bals)$ ($\bals \in A$), given the side information set $S$, with repair bandwidth
    \begin{equation}
    \label{eq:upper_bound}
    (q^\ell-1)(\ell-s) - \max_{\W} \textstyle\sum_{\bga \in \fqls}\dim\big(\bga\ff_{q^{\ell-s}}\cap\W\big),
    \end{equation}
where the $\max$ is taken over all $m$-dimensional $\fq$-subspaces $\W$ of $\fql$.
\end{corollary}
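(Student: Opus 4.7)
The strategy is to apply Theorem~\ref{thm:subspace_intersection} with a structured choice of target subspace $\T$ of the form $\bde\fqlms$, and then use a simple change of variable to rewrite the resulting sum in the form displayed in \eqref{eq:upper_bound}. No new ingredients beyond the two results already proved in this section are needed.

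First, since $(\ell-s)\mid \ell$, the subfield $\fqlms$ is an $\fq$-subspace of $\fql$ of $\fq$-dimension $\ell-s$, so it is a legitimate candidate for the role of $\T'$ in Lemma~\ref{lem:basis_completion}. Applying that lemma to $\T' = \fqlms$ together with the side-information subspace $\S$ (of dimension $s$), I would extract an element $\bde\in\fqls$ such that $\T \define \bde\fqlms$ satisfies $\S \oplus \T = \fql$. Thus $\T$ is a valid, although not necessarily optimal, choice for the target subspace in the maximization of Theorem~\ref{thm:subspace_intersection}; plugging it in yields an upper bound on the minimum subspace-polynomial bandwidth that is attained by the explicit scheme corresponding to $(\T,\W^*)$, where $\W^*$ is any $m$-dimensional $\fq$-subspace of $\fql$ maximizing the resulting sum over $\W$.

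Next, I would reindex the sum over helper nodes. Since $A = \fql$, the set $\{\bal - \bals : \bal\in A\setminus\{\bals\}\}$ equals $\fqls$, and multiplication by the nonzero element $\bde$ is a bijection on $\fqls$. Hence $\bal \mapsto \bga \define (\bal-\bals)\bde$ is a bijection $\fql\setminus\{\bals\}\to\fqls$, and
\[
\sum_{\bal\in\fql\setminus\{\bals\}}\dim\big((\bal-\bals)\bde\fqlms\cap\W\big)
=\sum_{\bga\in\fqls}\dim\big(\bga\fqlms\cap\W\big).
\]
Taking the maximum over $\W$ on both sides and combining with Theorem~\ref{thm:subspace_intersection} delivers exactly \eqref{eq:upper_bound}.

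The only delicate point is invoking Lemma~\ref{lem:basis_completion} correctly: one must exhibit a coset of $\fqlms$ whose intersection with $\S$ is trivial, which is why the divisibility hypothesis $(\ell-s)\mid\ell$ is needed (otherwise $\fqlms$ is not even an $\fq$-subspace of $\fql$). Everything else is a routine bijection-based calculation. The substantive, and harder, task — actually evaluating or sharply bounding $\max_{\W}\sum_{\bga\in\fqls}\dim(\bga\fqlms\cap\W)$ so as to match the lower bound of Proposition~\ref{prop:lower_bound} in specific parameter regimes — is the subspace-intersection problem deferred to the remainder of the paper.
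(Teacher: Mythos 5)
Your proposal is correct and follows essentially the same route as the paper: invoke Lemma~\ref{lem:basis_completion} with $\T'=\fqlms$ to obtain $\bde$ with $\S\oplus\bde\fqlms=\fql$, substitute $\T=\bde\fqlms$ into Theorem~\ref{thm:subspace_intersection}, and reindex via the bijection $\bal\mapsto(\bal-\bals)\bde$ from $\fql\setminus\{\bals\}$ onto $\fqls$. Your added remarks on why $(\ell-s)\mid\ell$ is needed and on the resulting expression being an achievable (not necessarily minimal) bandwidth are accurate and consistent with the paper's reading of the corollary.
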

\begin{proof}
    Let $\S\define \spn(S)$ be the side information subspace. By Theorem~\ref{thm:subspace_intersection}, the minimum bandwidth achieved by a subspace-polynomial repair scheme with side information subspace $\S$ is
    \begin{equation}
    \label{eq:subfield_as_target}        
    (q^\ell-1)(\ell-s) - \max_{\T, \W} \sum_{\bal \in A\setminus\{\bals\}}\dim\big((\bal-\bals)\T\cap\W\big),
    \end{equation}
    where the $\max$ is taken over all $\fq$-subspaces $\T$ and $\W$ of $\fql$ with $\dim(\T)=\ell-s$, $\S\oplus \T=\fql$, and $\dim(\W)=m$. To prove the existence of a repair scheme with bandwidth given by \eqref{eq:upper_bound}, we show that a (multiplicative) coset of $\fqlms$ can be a (valid) target subspace. Indeed, according to Lemma~\ref{lem:basis_completion}, there exists $\bde \in \fql^*$ such that $\S \oplus \bde \fqlms = \fql$, i.e. $\bde \fqlms$ is a valid target subspace w.r.t. the side information subspace $\S$. Hence, setting $\T=\bde \fqlms$ in \eqref{eq:subfield_as_target} and using the assumption that $A = \fql$, there exists a subspace-polynomial repair scheme given the side information subspace $\S$ that achieves the bandwidth
    \[
    (q^\ell-1)(\ell-s) - \max_{\W} \sum_{\bal \in \fql\setminus\{\bals\}}\dim\big((\bal-\bals)\bde \fqlms \cap\W\big),    
    \]
    which is the same as \eqref{eq:upper_bound} when replacing $(\bal-\bals)\bde$ by $\bga\in \fqls$, noting that $\big\{(\bal-\bals)\bde \colon \bal \in \fql \setminus \{\bals\}\big\} \equiv \fqls$.
\end{proof}

Using Corollary~\ref{cr:full_length}, assuming a full-length code with $(\ell-s) \mid  \ell$, we can now construct a few concrete subspace-polynomial repair schemes that achieve optimal repair bandwidths among \textit{all} linear schemes.
To construct the first repair scheme achieving optimal repair bandwidth, we first prove an auxiliary lemma.

\begin{lemma}
    \label{lem:coprime}
    For every $a\mid \ell$, $b\mid \ell$, and $(a,b)=1$, and for every $\bga, \bde \in \fqls$, it holds that $\dim(\bga\fqa \cap \bde\fqb) \in \{0,1\}$.
\end{lemma}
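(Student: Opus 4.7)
The plan is to reduce to a multiplicative normalization and then exploit the well-known identity $\fqa \cap \fqb = \mathbb{F}_{q^{\gcd(a,b)}}$, which under the hypothesis $(a,b) = 1$ becomes simply $\fq$.

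First, I would observe that scaling by any nonzero element of $\fql$ is an $\fq$-linear automorphism of $\fql$, hence $\dim(\bga \fqa \cap \bde \fqb) = \dim(\fqa \cap \bla \fqb)$ where $\bla \triangleq \bga^{-1}\bde \in \fqls$. Thus it suffices to prove that $\dim(\fqa \cap \bla \fqb) \le 1$ for every $\bla \in \fqls$.

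Next, I would argue by contradiction: suppose the intersection contains two $\fq$-linearly independent elements $x, y$. Since $0$ is the only element that might be linearly dependent with anything, both $x$ and $y$ lie in $\fqa^* \cap \bla \fqb^*$, so in particular both are nonzero. Write $x = \bla u$ and $y = \bla v$ with $u, v \in \fqb^*$. The ratio $y/x$ then lies simultaneously in $\fqa$ (as a quotient of two elements of $\fqa^*$) and in $\fqb$ (as $v/u$, the $\bla$ factors cancelling). Because $a \mid \ell$ and $b \mid \ell$ and $(a,b) = 1$, the standard subfield lattice in $\fql$ gives $\fqa \cap \fqb = \mathbb{F}_{q^{(a,b)}} = \fq$. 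Hence $y/x \in \fq$, contradicting the $\fq$-linear independence of $x$ and $y$.

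The argument is elementary; the only conceptual subtlety is that the constraint $\bla \in \fqls$ is needed to make the reduction to $\bga = 1$ valid, and the hypothesis $(a,b) = 1$ is used in exactly one place, namely to collapse $\fqa \cap \fqb$ to the prime subfield $\fq$. No substantial obstacle is anticipated; the step requiring the most care is simply verifying that $y/x$ belongs to both $\fqa$ and $\fqb$, which follows immediately once the elements of the intersection are written in the two parametrizations $x,y \in \fqa$ and $x,y \in \bla \fqb$.
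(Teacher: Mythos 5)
Your proof is correct and follows essentially the same route as the paper's: both arguments come down to showing that the ratio of any two nonzero elements of the intersection lies in $\fqa \cap \fqb = \mathbb{F}_{q^{(a,b)}} = \fq$, forcing $\fq$-linear dependence. The only cosmetic difference is that the paper verifies the membership of the ratio in both subfields via an explicit primitive-element parametrization of $\fqas$ and $\fqbs$, whereas you normalize by $\bga^{-1}$ and cancel the $\bla$ factors directly.
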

\begin{proof}
    Note that $\fqas = \big\{\bxi^{\frac{q^\ell-1}{q^a-1}}\big\}_{i=0}^{q^a-2}$ and $\fqbs = \big\{\bxi^{\frac{q^\ell-1}{q^b-1}}\big\}_{i=0}^{q^b-2}$, where $\bxi$ is a primitive element of $\fql$. To show that $\dim(\bga\fqa \cap \bde\fqb)\in \{0,1\}$, it suffices to show that for any $\bu, \bv \in \bga\fqas \cap \bde\fqbs$, it holds that $\bu/\bv \in \fq$. Indeed, for such $\bu$ and $\bv$, there exist $x$, $y$, $z$, and $w$ such that \vspace{-5pt}
    \[
    \begin{split}
    \bu &= \bga \big(\bxi^{\frac{q^\ell-1}{q^a-1}}\big)^x = \bde\big(\bxi^{\frac{q^\ell-1}{q^b-1}}\big)^z,\
    \bv = \bga \big(\bxi^{\frac{q^\ell-1}{q^a-1}}\big)^y = \bde\big(\bxi^{\frac{q^\ell-1}{q^b-1}}\big)^w,
    \end{split}
    \]
    which implies that\vspace{-5pt}
    \[
    \frac{\bu}{\bv} = \big(\bxi^{\frac{q^\ell-1}{q^a-1}}\big)^{x-y} = \big(\bxi^{\frac{q^\ell-1}{q^b-1}}\big)^{z-w}\in \fqa \cap \fqb = \bbF_{q^{(a,b)}} = \fq.
    \]
    The proof follows.
\end{proof}

The following theorem indicates the existence of optimal repair schemes for a full-length Reed-Solomon codes with side information size $s$, where $(\ell - s)|\ell$. We prove that the existing subspace repair scheme can be constructed from a subfield $\fqm$ of $\fql$, where $n - k \geq q^m$, $\ell > m \geq 1$, $m|\ell$, and $(\ell - s,m) = 1$. However, for any coset of $\fqm$, the proof is still right.

\begin{theorem}
    \label{thm:bandwidth_coprime}
   Consider a full-length Reed-Solomon codes RS$(n = q^\ell, k)$ over $\fql$ with $n - k \geq q^m$ for some $\ell > m \geq 1$. If $(\ell-s)\mid \ell$, $m\mid \ell$, and $(\ell - s,m) = 1$, then there exists a linear repair scheme with side information of size $s$ that uses the repair bandwidth of $(q^\ell - 1)(\ell - s) - (q^{\ell - s} -1)(q^m -1)/(q-1)$ sub-symbols in $\fq$. The scheme is optimal when $n-k=q^m$.
\end{theorem}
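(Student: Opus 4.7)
The plan is to invoke Corollary~\ref{cr:full_length} and produce an explicit $m$-dimensional $\fq$-subspace $\W \subseteq \fql$ making the deficit term $\sum_{\bga \in \fqls}\dim_{\fq}(\bga\fqlms \cap \W)$ as large as $(q^{\ell-s}-1)(q^m-1)/(q-1)$. The natural candidate suggested by the hypotheses $m \mid \ell$ and $(m,\ell-s)=1$ is the subfield itself: I take $\W = \fqm$, which is a genuine $m$-dimensional $\fq$-subspace of $\fql$. Corollary~\ref{cr:full_length} already absorbs the choice of a valid target subspace $\bde\fqlms$ via Lemma~\ref{lem:basis_completion}, so I do not need to separately check side-information compatibility here.

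With this choice, Lemma~\ref{lem:coprime} applied to $a=\ell-s$, $b=m$, $\bde = 1$ forces $\dim_{\fq}(\bga\fqlms \cap \fqm) \in \{0,1\}$ for every $\bga \in \fqls$, so the sum I need reduces to counting the number of $\bga \in \fqls$ for which the intersection is nonzero. Such a $\bga$ exists iff $\bga u = v$ for some $u \in \fqlms^*$ and $v \in \fqm^*$, equivalently $\bga \in \fqm^* \cdot \fqlms^*$, the subgroup of $\fql^*$ generated by $\fqm^*$ and $\fqlms^*$. Using the abelian-group identity $|H_1 H_2| = |H_1||H_2|/|H_1 \cap H_2|$, together with $\fqm \cap \fqlms = \mathbb{F}_{q^{(m,\ell-s)}} = \fq$, I obtain $|\fqm^* \cdot \fqlms^*| = (q^m-1)(q^{\ell-s}-1)/(q-1)$, which is exactly the deficit I wanted. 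Plugging this back into Corollary~\ref{cr:full_length} yields the claimed bandwidth $(q^\ell-1)(\ell-s) - (q^{\ell-s}-1)(q^m-1)/(q-1)$.

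For the optimality assertion when $n-k = q^m$, I match against the lower bound in Corollary~\ref{cr:lower_bound_special_case}. The hypothesis $\ell \geq m(\ell-s)$ needed there is automatic under the present assumptions: $m \mid \ell$, $(\ell-s) \mid \ell$, and $(m,\ell-s) = 1$ jointly imply $m(\ell-s) \mid \ell$, hence $m(\ell-s) \leq \ell$. The lower bound then reads $(q^\ell-1)(\ell-s) - (q^{\ell-s}-1)(q^m-1)/(q-1)$, which coincides with what I constructed, settling optimality. The step I expect to require the most care is the counting: justifying the equivalence $\bga\fqlms \cap \fqm \neq \{0\} \Leftrightarrow \bga \in \fqm^* \cdot \fqlms^*$ and the identification $\fqm \cap \fqlms = \fq$; once Corollary~\ref{cr:full_length} and Lemma~\ref{lem:coprime} are in hand, the rest is bookkeeping.
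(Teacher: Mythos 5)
Your proposal is correct and follows essentially the same route as the paper: both take $\W=\fqm$, invoke Lemma~\ref{lem:coprime} to reduce the deficit term to counting the $\bga\in\fqls$ with $\bga\fqlms\cap\fqm\neq\{0\}$, and settle optimality via Corollary~\ref{cr:lower_bound_special_case} after noting that the hypotheses force $m(\ell-s)\mid\ell$. The only (harmless) difference is the final count: you use the subgroup product formula $|\fqm^*\cdot\fqlms^*|=(q^m-1)(q^{\ell-s}-1)/(q-1)$, whereas the paper double-counts by observing that the $1$-dimensional intersections partition $\W^*$ into $(q^m-1)/(q-1)$ lines, each arising from exactly $q^{\ell-s}-1$ values of $\bga$.
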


\begin{proof}
Set $\W \define \fqm$, and let $T \define \{\bbe_i\}_{i=s+1}^\ell$ be a basis of ~$\T \define \ff_{q^{\ell -s}}$. We now consider the subspace repair scheme constructed from $\W$.
The statement that the achieved bandwidth is optimal among all linear repair schemes is obvious due to Corollary~\ref{cr:lower_bound_special_case}, noting that the assumptions $(\ell-s) \mid \ell$, $m \mid \ell$, and $(\ell-s,m)=1$ imply $\ell \geq (\ell-s)m$. It remains to show that the stated scheme has the stated bandwidth. 
Indeed, from Lemma \ref{lem:bandwidth_intersection}, it is sufficient to prove that 
$\sum_{\bga \in \fqls}\dim\big(\bga\T \cap \W\big) = \frac{(q^{\ell - s} -1)(q^m -1)}{q-1}$.
By Lemma \ref{lem:coprime}, we note that to get the sum $\sum_{\bga \in \fqls}\dim(\bga\T \cap \fqm)$, we compute the number of elements $\bga$ so that $\dim(\bga\T \cap \fqm) = 1$.
As the set of $1$-dimensional intersections $\bga\T \cap \W$  is a partition of $\W$ into $1$-dimensional subspaces, there are $\frac{|\W^*|}{q-1}$ such subspaces. Moreover, each of them is repeated $q^{\ell-s} -1$ times, since $\bga'\T = \bga\T$ for all $\bga' \in \bga\T$. Thus, the number of elements $\bga$ with $\dim(\bga\T \cap \fqm) = 1$ is $(q^{\ell - s} -1)(q^m -1)/(q-1)$, which completes the proof. 
\end{proof}
\vspace{-5pt}

Now we consider a greedy
construction that generates $m$-dimensional subspaces $\W$, $m < \ell$, that generate subspace-polynomial repair schemes with minimal repair bandwidths. Assume that $a \mid \ell$. The aim is to construct a subspace $\W$ of $\fql$ satisfying $\dim(\bga\fqa \cap \W)\in \{0,1\}$ for all $\bga \in \fqls$. 

\vspace{-5pt}
\begin{lemma}
\label{lem:greedy_construction}
Assume that $a\mid \ell$ and that $q^\ell \hspace{-2pt} > \hspace{-2pt} \binom{q^{m-1}}{2}\Big(\frac{q^a-1}{q-1}\Big)^2+1$. Then there exists an $m$-dimensional subspace $\W$
 satisfying that $\dim(\bga\fqa \cap \W)\in \{0,1\}$ for all $\bga \in \fqls$.
 \end{lemma}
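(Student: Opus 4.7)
The plan is a greedy construction of $\W$. Set $\W_0 \define \{0\}$ and, for $i = 1, \ldots, m$, pick $w_i \in \fql \setminus \W_{i-1}$ and form $\W_i \define \W_{i-1} + \fq w_i$, while maintaining the invariant $\dim(\bga\fqa \cap \W_i) \le 1$ for every $\bga \in \fqls$. The base case $\W_1$ holds for any nonzero $w_1$, so it suffices to exhibit a legal $w_i$ at each subsequent step, after which $\W \define \W_m$ is the desired subspace.

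Call $w_i$ \emph{bad} if the resulting $\W_i$ contains a $2$-dimensional $\fq$-subspace $P$ lying inside some $\bga\fqa$. Since $\W_{i-1}$ is good by induction, no such $P$ can be contained in $\W_{i-1}$, and a dimension count inside $\W_i$ then forces $\dim(P \cap \W_{i-1}) = 1$. Writing $P \cap \W_{i-1} = \fq v$ with $v \in \W_{i-1}^*$ and noting $P \subseteq v\fqa$, every element of $P \setminus \W_{i-1}$ takes the form $\xi v$ with $\xi \in \fqa \setminus \fq$; writing one such element as $\alpha w_i + u'$ with $\alpha \in \fq^*$, $u' \in \W_{i-1}$ gives
\[
w_i = \alpha^{-1}(\xi v - u').
\]
Hence every bad $w_i$ is encoded by a quadruple $(v, \xi, u', \alpha) \in \W_{i-1}^* \times (\fqa \setminus \fq) \times \W_{i-1} \times \fq^*$.

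A cleaner count groups the bad $w_i$ by the bad plane $P$ they create: each such $P$ is specified by a line $L = \fq v \subseteq \W_{i-1}$ (for which there are $\frac{q^{i-1}-1}{q-1}$ choices) together with a line in $v\fqa/L$ ($\frac{q^{a-1}-1}{q-1}$ choices), and it is realized by at most $|(P + \W_{i-1}) \setminus \W_{i-1}| = q^{i-1}(q-1)$ values of $w_i$. The hardest step is $i = m$, where I would verify that the resulting total is at most $\binom{q^{m-1}}{2}\bigl(\frac{q^a-1}{q-1}\bigr)^2 + 1 - q^{m-1}$, so that the hypothesis $q^\ell > \binom{q^{m-1}}{2}\bigl(\frac{q^a-1}{q-1}\bigr)^2 + 1$ still leaves at least one valid candidate in $\fql \setminus \W_{m-1}$. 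The main obstacle is the combinatorial bookkeeping --- parameterizing the bad $w_i$ without serious overcounting and massaging the final bound into the closed form stated in the hypothesis. Once the inequality at $i = m$ is in hand, the greedy step succeeds at every $i \le m$, yielding the required $m$-dimensional $\W$.
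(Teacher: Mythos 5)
Your argument is essentially the paper's proof: a greedy extension of $\W_{i-1}$ to $\W_i$, protected at each step by a union bound over the candidates $w_i$ that would create a $2$-dimensional intersection of $\W_i$ with some $\bga\fqa$. The only difference is bookkeeping: the paper excludes the set $B_{m-1}=\{\bal_1\bu+\bal_2\bv\colon \bu,\bv\in\W_{m-1},\ \bu\neq\bv,\ \bal_1,\bal_2\in\fqa\}$ and bounds it by exactly $\binom{q^{m-1}}{2}\big(\tfrac{q^a-1}{q-1}\big)^2+1$ (which is where the hypothesis comes from), whereas you group the bad candidates by the plane $P$ they would create. The one step you defer does go through, and in fact your count is the tighter one: at step $m$ you exclude at most $\tfrac{q^{m-1}-1}{q-1}\cdot\tfrac{q^{a-1}-1}{q-1}\cdot q^{m-1}(q-1)=2\binom{q^{m-1}}{2}\tfrac{q^{a-1}-1}{q-1}$ bad candidates plus the $q^{m-1}$ elements of $\W_{m-1}$, and this total is at most $\binom{q^{m-1}}{2}\big(\tfrac{q^a-1}{q-1}\big)^2+1$ because, writing $A'=\tfrac{q^{a-1}-1}{q-1}$ and expanding $\tfrac{q^a-1}{q-1}=1+qA'$, one gets $\big(\tfrac{q^a-1}{q-1}\big)^2-2A'=1+2(q-1)A'+q^2A'^2\ge 1$, while $\binom{q^{m-1}}{2}\ge q^{m-1}-1$. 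So the hypothesis $q^\ell>\binom{q^{m-1}}{2}\big(\tfrac{q^a-1}{q-1}\big)^2+1$ indeed leaves a valid $w_m$, and your proof is complete once this short computation is written out.
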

\begin{proof}
We will construct in a greedy manner a set $\{\w_1, \dots, \w_m\}\subset \fqls$ that satisfies two properties given below. 
\begin{itemize}
    \item (P1) $\{\w_1, \dots, \w_m\}$ is $\fq$-linearly independent, and 
    \item (P2) the subspace $\W_m \define \spn\big(\{\w_1, \dots, \w_m\}\big)$ satisfies that $\dim(\bga\fqa \cap \W_m)\in \{0,1\}$ for all $\bga \in \fqls$ 
\end{itemize}
The first element $\w_1$ can be picked arbitrarily in $\fqls$ because $\W_1 \define \spn(\{w_1\})$ satisfies (P1) and (P2) obviously. Assume that we have already had a set $\{\w_1, \dots, \w_{m-1}\}$ that satisfies (P1) and (P2). We now show that we can find $\w_m$ so that $\{\w_1, \dots, \w_m\}$ satisfies (P1) and (P2) given that $a \mid \ell$ and $q^\ell \hspace{-2pt} > \hspace{-2pt} \binom{q^{m-1}}{2}\Big(\frac{q^a-1}{q-1}\Big)^2+1$.
Consider the set \vspace{-5pt}
\[
B_{m-1} \define \{ \bal_1\bu + \bal_2\bv \colon \bu, \bv \in \W_{m-1}, \bu\neq \bv, \bal_1, \bal_2 \in \fqa\}.\vspace{-5pt}
\]

\noindent\textbf{Claim 1}: $|B_{m-1}| \leq \binom{q^{m-1}}{2}\Big(\frac{q^a-1}{q-1}\Big)^2+1 < |\fql|$.

\noindent\textbf{Claim 2}: Any $\w_m\in \fql\setminus B_{m-1}$ satisfies (P1)-(P2).\vspace{-5pt}

\begin{proof}[Proof of Claim 1]
Note that $\bal_1=\bal_2=0$ gives $0\in B_{m-1}$. Since $(\tau\bal) \bu = \bal (\tau\bu)$ and $\tau\bu\in \W_{m-1}$ for $\tau\in \fq$ and $\bu\in \W_{m-1}$, to count the elements of $B_{m-1}$ corresponding to $\bal_1\neq 0$ and $\bal_2\neq 0$, we only need to consider $(q^a-1)/(q-1)$ values for each $\bal_1$ and $\bal_2$. Moreover, we can ignore the case $\bal_1\neq 0$ and $\bal_2 =0$ or vice versa as the resulting elements are already counted for $\bal_1 \neq 0 $ and $\bal_2 \neq 0$ when setting either $\bv = 0$ or $\bu = 0$, respectively. Thus, other than $0$, $B_{m-1}$ has at most $\big(\frac{q^a-1}{q-1}\big)^2 \binom{q^{m-1}}{2}$ other elements, where the binomial factor counts the number of distinct pairs $\bu, \bv\in \W_{m-1}$. Thus, $|B_{m-1}|\leq \big(\frac{q^a-1}{q-1}\big)^2 \binom{q^{m-1}}{2} + 1$ elements.
\end{proof}\vspace{-5pt}
\begin{proof}[Proof of Claim 2]
Since $\W_{m-1}\subseteq B_{m-1}$, $\w_m\notin \W_{m-1}$, which implies (P1).
Assume, for the sake of contradiction, that $\dim(\W_m\cap \bga \fqa)\geq 2$ for some $\bga \in \fqls$. Then there exist $\bu, \bv\in \W_{m-1}$, $\bu\neq \bv$ so that either a) $\{\w_m+\bu,\w_m+\bv\}\subset \bga\fqa$ and $\rank(\{\w_m+\bu,\w_m+\bv\}) = 2$, or b) $\{\w_m+\bu,\bv\}\subset \bga\fqa$ and $\rank(\{\w_m+\bu,\bv\}) = 2$. If a) occurs, then there exist $\bx,\by\in \fqa$, $\bx\neq 0$, $\by\neq 0$, $\bx\neq \by$, so that $\w_m+\bu=\bga\bx$ and $\w_m+\bv=\bga\by$, which implies that $\w_m = \frac{\by}{\bx-\by}\bu + \frac{\bx}{\by-\bx}\bv \in B_{m-1}$, which contradicts our assumption. The case b) can be treated similarly. \vspace{-5pt} 
\end{proof}

The proof of Lemma~\ref{lem:greedy_construction} follows from these two claims. Indeed, by Claim~1, there exists at least one element in $\fql\setminus B_{m-1}$, which is the desired $\w_m$ according to Claim~2. 
\end{proof}


\begin{theorem}
    \label{thm:greedy_scheme}
     Consider a full-length Reed-Solomon codes RS$(n = q^\ell, k)$ over $\fql$ with $n - k \geq q^m$ for some $\ell > m \geq 1$. If $(\ell-s)\mid \ell$ and $q^\ell \hspace{-2pt} > \hspace{-2pt} \binom{q^{m-1}}{2}\Big(\frac{q^{\ell-s}-1}{q-1}\Big)^2+1$, then there exists a linear repair scheme with side information of size $s$ that uses the repair bandwidth of $(q^\ell - 1)(\ell - s) - (q^{\ell - s} -1)(q^m -1)/(q-1)$ sub-symbols in $\fq$. The scheme is optimal when $n-k=q^m$.
\end{theorem}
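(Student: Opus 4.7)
The plan is to mirror the proof of Theorem~\ref{thm:bandwidth_coprime}, replacing the coprimality-based Lemma~\ref{lem:coprime} by the greedy construction in Lemma~\ref{lem:greedy_construction}. First I would verify that Lemma~\ref{lem:greedy_construction} applies with $a = \ell - s$: the divisibility $(\ell-s) \mid \ell$ is given, and the size bound $q^\ell > \binom{q^{m-1}}{2}\bigl((q^{\ell-s}-1)/(q-1)\bigr)^2 + 1$ is exactly the stated hypothesis of the theorem. This produces an $m$-dimensional $\fq$-subspace $\W \subseteq \fql$ satisfying $\dim(\bga\, \fqlms \cap \W) \in \{0,1\}$ for every $\bga \in \fqls$. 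Plugging this $\W$ into Corollary~\ref{cr:full_length} then yields a subspace-polynomial repair scheme whose bandwidth, for any side-information set of size $s$, equals $(q^\ell-1)(\ell-s) - \sum_{\bga \in \fqls} \dim(\bga\, \fqlms \cap \W)$.

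Next, I would evaluate this sum by the same partition argument used in the proof of Theorem~\ref{thm:bandwidth_coprime}. The sets $\bga\, \fqlms$ partition $\fql^*$ into $(q^\ell-1)/(q^{\ell-s}-1)$ distinct classes, and in the sum each class is counted $q^{\ell-s}-1$ times, since $\bga\, \fqlms = \bga'\, \fqlms$ iff $\bga/\bga' \in (\fqlms)^*$. By the greedy property, every class either misses $\W$ or intersects it in a one-dimensional $\fq$-subspace; conversely, every one-dimensional $\fq$-subspace of $\W$ arises as such an intersection (pick any nonzero element of the line; it lies in a unique class, and that class contributes a one-dimensional intersection containing the line, forcing equality by the $\{0,1\}$ constraint). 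Hence the number of classes contributing a one-dimensional intersection is exactly $(q^m-1)/(q-1)$, giving $\sum_{\bga \in \fqls} \dim(\bga\, \fqlms \cap \W) = (q^m-1)(q^{\ell-s}-1)/(q-1)$, and therefore the claimed bandwidth.

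For the optimality assertion when $n-k = q^m$, I would invoke Corollary~\ref{cr:lower_bound_special_case}, which supplies a matching lower bound of the same expression on every linear repair scheme under the hypotheses $(\ell-s)\mid \ell$ and $\ell \geq m(\ell-s)$. Since the bandwidth achieved above meets this lower bound, the scheme is optimal. The real conceptual content lives entirely in Lemma~\ref{lem:greedy_construction}; beyond that, the only delicate point is bookkeeping in the partition count, which I expect to pose no obstacle. The one subtlety worth a sanity check is confirming that the greedy size condition on $q^\ell$ is compatible with the parameter regime $\ell \geq m(\ell-s)$ needed to apply Corollary~\ref{cr:lower_bound_special_case} when asserting optimality, but this is a parameter check rather than a proof difficulty.
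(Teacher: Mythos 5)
Your proposal coincides with the paper's proof, which is literally ``apply Lemma~\ref{lem:greedy_construction} to obtain $\W$ with $\dim(\bga\fqlms\cap\W)\in\{0,1\}$, then proceed as in Theorem~\ref{thm:bandwidth_coprime}'' --- exactly your route, including the partition count of one-dimensional intersections and the appeal to Corollary~\ref{cr:lower_bound_special_case} for optimality. The ``sanity check'' you defer at the end is in fact a genuine gap that the paper also leaves unaddressed: the hypotheses of Theorem~\ref{thm:greedy_scheme} do \emph{not} always imply the condition $\ell\geq m(\ell-s)$ required by Corollary~\ref{cr:lower_bound_special_case} (e.g.\ $q=2$, $\ell=20$, $\ell-s=5$, $m=5$ satisfies $(\ell-s)\mid\ell$ and $2^{20}>\binom{16}{2}\cdot 31^2+1$, yet $\ell=20<25=m(\ell-s)$), so your instinct that this point needs verification is well placed --- in that regime the optimality assertion is not actually justified by the cited corollary.
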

\begin{proof}
By Lemma \ref{lem:greedy_construction}, there exists an $m$-dimensional subspace $\W$ satisfying $\dim(\bga\fqlms \cap \W)\in \{0,1\}$, for all $\bga \in \fqls$. The rest of the proof proceeds similarly to that of Theorem~\ref{thm:bandwidth_coprime}.
\end{proof}

\subsection{Bandwidth Reductions Given Side Information}

To illustrate the repair bandwidth reduction in the presence of side information, we consider as an example the parameter regime assumed in Theorem~\ref{thm:bandwidth_coprime}. 
Case 1: $s=\ell-1$ and $m=\ell/d$ for some constant $d\geq 2$. Theorem~\ref{thm:bandwidth_coprime} gives a repair bandwidth with side information $\bwsi = (q^\ell-1)-(q^{\ell/d}-1)$. The optimal repair bandwidth with no side information is $\bw = (q^{\ell-1}-1)(1-1/d)\ell$ (see~\cite{DauMilenkovic2017,DauDinhKiahTranMilenkovic2021}). Clearly, $\lim_{\ell\to\infty}\bwsi/\bw=0$.
Case 2: $s=c\ell/(c-1)$, i.e. $\ell-s=\ell/c$, and $m=\ell/d$, for some constants $c,d\geq 2$. Then $\bwsi = (q^\ell-1)\ell/c-(q^{\ell/c}-1)(q^{\ell/d}-1)/(q-1)$, whereas $\bw = (q^\ell-1)(d-1)\ell/d$. Clearly, $\bw-\bwsi \geq (q^{\ell/c}-1)(q^{\ell/d}-1)/(q-1)\to \infty$ as $\ell\to \infty$. 

\section{Conclusions}
\label{sec:conclusions}
We proposed the problem of repairing a single erasure of Reed-Solomon codes with side information, which generalizes the standard repair problem, 
and established a lower bound on the repair bandwidth of a linear repair scheme. The problem of constructing optimal subspace-polynomial repair schemes can be reduced to a subspace intersection problem, which is interesting in its own right. We settled this problem for a few parameter regimes, leaving the general case open for future research.


\section*{Acknowledgement}
The work of Son Hoang Dau was supported by the Australia Research Council DECRA Grant DE180100768 and DP Grant DP200100731. 
The work of Han Mao Kiah was supported by the Ministry of Education, Singapore, under its MOE AcRF Tier~2 Award under Grant MOE-T2EP20121-0007 and MOE AcRF Tier~1 Award under Grant RG19/23. 
The work of Stanislav Kruglik was supported by the Ministry of Education, Singapore, under its MOE AcRF Tier~2 Award under Grant MOE-T2EP20121-0007. 
\newpage
\bibliographystyle{IEEEtran}
\bibliography{ISIT-RepairingRSCodesWithSideInfo}

\newpage
\section{Appendix}
\label{sec:appendix}

\subsection{Proof of Proposition~\ref{pro:scheme_corresponds_polys}}
\label{app:linear_scheme}

The first part of this appendix is devoted for the discussion on the definition and the existence of a linear repair scheme for a failed node with side information of size $s$ of Reed-Solomon code RS$(n,k)$. Similar to an (exact) linear repair scheme for a failed node
with standard repair, a linear repair scheme for a node with side information of size $s$ is described by elements $\bga$'s used in each trace along with a linear algorithm. 

We first propose the definition of a linear repair scheme with side information of size $s$ which is modeled after the definition of a linear repair scheme with the standard repair in \cite{GuruswamiWootters2016}.

\begin{definition}
    \label{def:linear_repair_scheme}
    A linear repair scheme with side information $S = \{\bbe_i\}_{i \in [s]}$ for a symbol $f(\bals)$ of Reed-Solomon code RS$(n,k)$ with evaluation point set $A$, $|A| = n$, over the coding field $\fql$ and the base field $\fq$ consists of\\
    $\bullet$ a set $Q_{\bal} \subset \fql$, for each $\bal\neq \bals$, and\\
    $\bullet$ $\ell-s$ coefficients $\eta_i \in \fq, i \in [s+1, \ell]$, where $\eta_i$'s are $\fq$-linear coefficients of the queries $\cup_{\bal\in A\setminus\{0\}}\{\tr(\bga f(\bal)): \bga \in Q_{\bal}\}$
    so that there is a linear reconstruction algorithm that computes 
    $f(\bals) = \sum_{i\in [\ell]} \eta_i\bnu_i$,
    where $\eta_i = \tr\big(\bbe_i f(\bals)\big)$, $i\in [s]$, which are already known from the side information $S$, and $\{\bnu_1, \dots, \bnu_{\ell}\}$ is an $\fq$-basis of $\fql$.
    The repair bandwidth $b$ is the total number of sub-symbols in $\fq$ returned by each node $\bal$, i.e.,
    $b = \sum_{\bal \in A \setminus\{\bals\}}\mid Q_{\bal} \mid$.
\end{definition}

\begin{lemma}
    \label{lem:scheme_queries_coefficients}
    Suppose there is a linear repair scheme for repairing $f(\bals)$ of RS$(A,k)$ with side information $S = \{\bbe_i\}_{i \in [s]}$ given by a set $\{Q_{\bal}\}_{\bal\in A\setminus\{\bals\}}$ and a linear algorithm as in Definition \ref{def:linear_repair_scheme}.
    Then, there is an $\fq$-linearly independent set $B = \{\bbe_i\}_{i \in [s+1,\ell]}$ so that there are elements $\bmu_{\bbe_i, \bal}$ satisfying $\bbe_i f(\bals) = \sum_{\bal \in A \setminus \{\bals\}}\bmu_{\bbe_i, \bal}f(\bal)$, for all $f \in \fql[x]$ of degree less than $k$, where $\{\bmu_{\bbe_i, \bal}\}_{\bbe_i \in B} \subseteq \spn(Q_{\bal})$.
\end{lemma}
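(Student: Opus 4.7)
The plan is to read off the target elements $\bbe_{s+1}, \ldots, \bbe_\ell$ from the reconstruction basis $\{\bnu_i\}$ via the trace-dual basis, identify the coefficients $\bmu_{\bbe_i, \bal}$ directly from the linear reconstruction algorithm, and then upgrade the resulting trace identity to an equality in $\fql$ by exploiting $\fql$-scaling and non-degeneracy of the trace form.

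First, I would let $\{\bnu_i^*\}_{i\in[\ell]}$ denote the trace-dual $\fq$-basis of $\{\bnu_i\}_{i\in[\ell]}$, i.e.\ $\tr(\bnu_i \bnu_j^*) = \delta_{ij}$. Since the reconstruction gives $f(\bals) = \sum_{i \in [\ell]} \eta_i \bnu_i$ for every polynomial $f$ of degree $< k$, applying $\tr(\bnu_j^* \,\cdot\,)$ to both sides forces $\eta_j = \tr(\bnu_j^* f(\bals))$ for every $j \in [\ell]$. Comparing with the given $\eta_i = \tr(\bbe_i f(\bals))$ for $i \in [s]$, and using that $f(\bals)$ attains every value in $\fql$ as $f$ ranges over constants, non-degeneracy of $\tr$ yields $\bnu_i^* = \bbe_i$ for $i \in [s]$. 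I would then \emph{define} $\bbe_i \define \bnu_i^*$ for $i \in [s+1, \ell]$; the collection $\{\bbe_i\}_{i\in[\ell]}$ is automatically an $\fq$-basis of $\fql$, so $B = \{\bbe_i\}_{i \in [s+1, \ell]}$ is $\fq$-linearly independent as required.

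Next, for each $i \in [s+1, \ell]$, the reconstruction algorithm expresses $\eta_i$ as an $\fq$-linear combination of the downloaded traces, say $\eta_i = \sum_{\bal \in A \setminus \{\bals\}} \sum_{\bga \in Q_\bal} c_{i, \bal, \bga} \tr(\bga f(\bal))$ with $c_{i, \bal, \bga} \in \fq$. I would set $\bmu_{\bbe_i, \bal} \define \sum_{\bga \in Q_\bal} c_{i, \bal, \bga} \bga \in \spn(Q_\bal)$, and then $\fq$-linearity of $\tr$ converts the equality $\eta_i = \tr(\bbe_i f(\bals))$ into the trace identity
\[
\tr\bigl(\bbe_i f(\bals)\bigr) \;=\; \sum_{\bal \in A \setminus \{\bals\}} \tr\bigl(\bmu_{\bbe_i, \bal} f(\bal)\bigr), \qquad \deg f < k.
\]

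The hard part, and the reason the lemma is not immediate, is that this trace identity is a priori weaker than the claimed identity in $\fql$. To close this gap I would exploit the $\fql$-linear structure of polynomial evaluation: for any $\sigma \in \fql$, the polynomial $\sigma f$ still has degree $< k$, and $(\sigma f)(\bal) = \sigma f(\bal)$ for every $\bal \in A$. Substituting $\sigma f$ into the displayed identity and pulling $\sigma$ outside via $\fq$-linearity of $\tr$ produces $\tr\bigl(\sigma \cdot \bigl(\bbe_i f(\bals) - \sum_{\bal} \bmu_{\bbe_i, \bal} f(\bal)\bigr)\bigr) = 0$ for every $\sigma \in \fql$. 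Non-degeneracy of $\tr$ then forces $\bbe_i f(\bals) = \sum_{\bal \in A \setminus \{\bals\}} \bmu_{\bbe_i, \bal} f(\bal)$ for every $f$ with $\deg f < k$, which is exactly the statement of the lemma. The rest is routine bookkeeping; the conceptual obstacle is precisely this $\fql$-scaling step, which is what promotes the $\fq$-linear reconstruction recipe into a genuine $\fql$-linear relation among the codeword symbols.
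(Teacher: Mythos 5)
Your proof is correct and follows essentially the same route as the paper's: identify the target set $B$ as the remaining trace-dual basis elements of the reconstruction basis $\{\bnu_i\}$, read off $\bmu_{\bbe_i,\bal}$ from the $\fq$-linear reconstruction coefficients, and promote the resulting trace identity to an $\fql$-identity by replacing $f$ with $\sigma f$ and invoking non-degeneracy of the trace. Your derivation that $\bnu_i^* = \bbe_i$ for $i \in [s]$ is in fact slightly more careful than the paper, which simply posits the dual-basis relationship at the outset.
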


\begin{proof}
    Suppose $B = \{\bbe_{s+1}, \dots, \bbe_{\ell}\}$ is an $\fq$-linearly independent set of $\fql$ so that $\{\bbe_1, \dots,\bbe_{\ell}\}$ is a basis of $\fql$ and $\{\bnu_1, \dots, \bnu_{\ell}\}$ is its trace dual basis.
    According to Definition \ref{def:linear_repair_scheme}, the linear repair algorithm computes coefficients $\eta_i \in \fq$ so that $f(\bals) = \sum_{i \in [\ell]}\eta_i\bnu_i$, where $\eta_i = \tr\big(\bbe_i f(\bals)\big)$ for $i \in [s]$, and  $\eta_i$'s, $i\in [s+1, \ell]$, are $\fq$-linear functions of the queries in 
    $\big\{\tr\big(\bga f(\bal)\big): \bga \in \cup_{\bal \in A\setminus\{\bals\}}Q_{\bal}\big\}$, i.e., for $i\in [s+1, \ell]$,
    \[
    \begin{split}
    \eta_i
    &= \sum_{\bal \in A\setminus\{\bals\}}\sum_{\bga \in Q_{\bal}}\om_{\bal, \bga}\tr\big(\bga f(\bal)\big)\\
    &= \sum_{\bal \in A\setminus\{\bals\}}\tr\big(\big(\sum_{\bga \in Q_{\bal}}\om_{\bal, \bga}\bga\big) f(\bal)\big)\\
    &= \sum_{\bal \in A\setminus\{\bals\}}\tr\big(\bmu_{\bbe_i,\bal} f(\bal)\big)\\
    &= \tr\big(\sum_{\bal \in A\setminus\{\bals\}}\bmu_{\bbe_i,\bal} f(\bal)\big),
    \end{split}
    \]
     for $\om_{\bal, \bga} \in \fq$,
     and $\bmu_{\bbe_i,\bal} \define \sum_{\bga \in Q_{\bal}}\om_{\bal, \bga}\bga \in \spn(Q_{\bal})$.
    Furthermore,
    $\tr\big(\beta_i f(\bals)\big) = \tr\big(\bbe_i\sum_{j \in [\ell]}\eta_j\bnu_j\big) = \sum_{j\in [\ell]}\eta_j\tr(\bbe_i\bga_j) = \eta_i$, $i \in [s+1, \ell]$.
    Then, for $i \in [s+1, \ell]$,
    \begin{equation}
        \label{eq:queries}
        \tr\big(\beta_i f(\bals)\big) = \tr\big(\sum_{\bal \in A\setminus\{\bals\}}\bmu_{\bbe_i,\bal} f(\bal)\big).
    \end{equation}
 The Equation \ref{eq:queries} holds for all polynomials $f \in \fql[x]$, $\deg(f) < k$, then for all $\bga \in \fqls$ and for all $i \in [s+1, \ell]$ it still holds, i.e., 
 $\tr\big(\bga\beta_j f(\bals)\big) = \tr\big(\bga\sum_{\bal \in A\setminus\{\bals\}}\bmu_{\bbe_i,\bal} f(\bals)\big)$, 
 which derives to
 $\beta_i f(\bals) = \sum_{\bal \in A\setminus\{\bals\}}\bmu_{\bbe_i,\bal} f(\bal)$. This completes the proof.
\end{proof}

 Lemma \ref{lem:scheme_queries_coefficients} ensures for the existence of a linear algorithm to repair a failed node $f(\bals)$ once there exists a linear repair scheme by ensuring the existence of the set $\{\bbe_i\}_{i \in [s+1, \ell]}$ and the elements $\bmu_{\bbe_i,\bal}$, for $i \in [s+1,\ell]$ and $\bal \in A\setminus\{\bals\}$. We now propose a linear algorithm to repair $f(\bals)$ of RS$(A,k)$ with side information $S = \{\bbe_i\}_{i\in [s]}$, which is modeled after the linear algorithm to repair RS$(A,k)$ in \cite{GuruswamiWootters2016}. 
\vspace{6pt}
\hrule
\begin{algorithm}
 \label{alg:linear_repair}
 Linear repair with side information $S = \{\bbe_i\}_{i \in [s]}$.
\hrule
\vspace{6pt}
\noindent
{\bf Input:} A set $A$ of evaluation points, a point $\bals \in A$ of the failed node $f(\bals)$, the $s$ traces $\tr\big(\bbe_i f(\bals)\big)$ corresponding to the side information $S = \{\bbe_i\}_{i \in [s]}$, the access to linear queries of the form $\tr(\bga f(\bal))$, for all $\bal \in A\setminus \{\bals\}$.\\
 {\bf Output:} the value $f(\bals)$.\\
 {\bf Steps:}
 \begin{enumerate}
     \item Choose a linearly independent set $B = \{\beta_i\}_{i\in [s+1, \ell]}$.
     \item \label{step2} Choose elements $\bmu_{\bbe_i,\bal} \in \fql$ for each pair of $\bbe_i \in B$ and $\bal \in A\setminus \{\bals\}$ so that 
     $\beta_i f(\bals) = \sum_{\bal \in A\setminus\{\bals\}}\bmu_{\bbe_i,\bal} f(\bal)$.
     \item for $\bbe_i \in B$ do
     \item Choose an arbitrary spanning set $Q_{\bal}$ for the set $\{\bmu_{\bbe_i, \bal}\}_{i \in [s+1,\ell]}$ and get the queries $\tr\big(\bga f(\bal)\big)$, $\bga \in Q_{\bal}$.
     \item Compute $\tr\big(\bmu_{\bbe_i,\bal}f(\bal)\big)$ for each $\bal \in A\setminus \{\bals\}$ through the traces $\tr(\bga f(\bal))$'s.
     \item Compute $\tr\big(\bbe_i f(\bals)\big) = \tr\big(\sum_{\bal \in A\setminus\{\bals\}}\bmu_{\bbe_i,\bal} f(\bal)\big),$ $i \in [s+1, \ell]$, by taking the trace of both sides of the equation in Step \ref{step2}.
     \item end
     \item Compute $f(\bals)$ from $\{\tr\big(\bbe_i f(\bals)\big)\}_{i\in [\ell]}$: 
     \[
     f(\bals) = \sum_{i \in [\ell]} \tr\big(\bbe_if(\bals)\big)\bnu_i,
     \]where $\{\bnu_1, \dots, \bnu_{\ell}\}$ is the dual basis of $\{\bbe_1, \dots, \bbe_{\ell}\}$.
 \end{enumerate} 
 \end{algorithm}
\hrule
\vspace{12pt}
    
The following proof of Proposition~\ref{pro:scheme_corresponds_polys} indicates that a linear repair scheme for a node with side information size $s$ of a code RS$(n,k)$ is equivalent to a set of $\ell-s$ polynomials of degree less than $n-k$.
\begin{proof}[Proof of Proposition~\ref{pro:scheme_corresponds_polys}]
Supposing that $\{\bnu_i\}_{i=1}^\ell$ is the trace-dual basis of $\{\bbe_i\}_{i=1}^\ell$, where $\{b_i\}_{i\in [s]} = S$ and $\{b_i\}_{i=s+1}^\ell = T$. Supposing that 
$f(\bals) = \sum_{i=1}^\ell \eta_i\bnu_i.$
According Lemma \ref{lem:scheme_queries_coefficients}, the work of defining $f(\bals)$ with side information $S$ is now the work of defining $\ell - s$ coefficients $\eta_i$, $i \in [s+1,\ell]$, and
$\tr\big(\bbe_if(\bals)\big) = \eta_i$.
This means that to define $\eta_i$, $i \in [s+1, \ell]$, it is enough to find $\tr\big(\bbe_if(\bals)\big)$, or $\tr\big(g_i(\bals)f(\bals)\big)$, $i \in [s+1, \ell]$.
Each polynomial $g_i(x)$, $i \in [s+1, \ell]$, of degree less than $n-k$ corresponds to a dual codeword of the Reed-Solomon codes RS$(A,k)$, which returns $\sum_{j=1}^ng_i(\bal_j)\lambda_jf(\bal_j) = 0$, where $\lambda = (\lambda_1,\dots, \lambda_n) \in \fql^n$.
Then, 
$g_i(\bals)\lambda^*f(\bals) = - \sum_{\bal_j \neq \bals}g_i(\bal_j)\lambda_jf(\bal_j)$, which is equivalent to
$g_i(\bals)f(\bals) = - \sum_{\bal_j \neq \bals}g_i(\bal_j)\frac{\lambda_j}{\lambda^*}f(\bal_j)$.
Applying the trace function on two sides of this equation we get 
$\tr\big(g_i(\bals)f(\bals)\big) = - \sum_{\bal_j \neq \bals}\tr\big(g_i(\bal_j)\frac{\lambda_j}{\lambda^*}f(\bal_j)\big)$.
In conclusion, each $\eta_i$, $i \in [s+1, \ell]$, can be computed through the traces $\tr\big(g_i(\bal_j)\frac{\lambda_j}{\lambda^*}f(\bal_j)\big), \bal_j \neq \bals$, which can totally be defined through the polynomials $g_i(x)$, $i \in [s+1, \ell]$.
\end{proof}

\subsection{Proof of Proposition~\ref{prop:lower_bound}}
\label{app:lower_bound}



\begin{proof}[Proof of Proposition~\ref{prop:lower_bound}]
According to Proposition~\ref{pro:scheme_corresponds_polys}, a linear repair scheme with side information size $s$ for a failed node $f(\bals)$ corresponds to a set of $\ell - s$ polynomials.
Supposing that the repair scheme  is the polynomial set $\{g_i(x)\}_{i\in [s+1,\ell]}$, where $\rank\big(\{g_{s+1}(\bals),\dots,g_\ell(\bals)\}\big) = \ell -s$ and 
$\rank\big(\{g_{s+1}(\bal),\dots,g_\ell(\bal)\}\big) = b_{\bal}$, for all $\bal\in A\setminus \{\bals\}$.
Therefore, the repair bandwidth of the repair scheme is $b = \sum_{\bal \in A\setminus\{\bals\}}b_{\bal}$. 
For each $\bal \in A$, let 
$S_{\bal} \define \{\vec{e} = (e_{s+1}, \dots, e_{\ell}) \in \fq^{\ell-s}: \sum_{i\in[s+1, \ell]}e_ig_i(\bal) = 0\}$.
Since $\rank\big(\{g_i(\bal)\}_{i \in [s+1, \ell]}\big) = b_{\bal}$, $\dim_{\fq}(S_{\bal}) = \ell - s - b_{\bal}$.
Averaging the size $|\{\bal \in A\setminus \{\bals\}: \vec{e} \in S_{\bal}\}|$ over all nonzero vectors $\vec{e} \in \fq^{\ell -s}$, we have
\[
    \begin{split}
    &\mu 
    \define \frac{1}{q^{\ell -s} -1}\sum_{\vec{e} \in \fq^{\ell - s} \setminus \{\vec{0}\}}\mid \{\bal \in A\setminus \{\bals\}: \vec{e} \in S_{\bal}\}\mid\\
    &=
     \frac{1}{q^{\ell -s} -1}\sum_{\bal \in A \setminus \{\bals\}}\mid \{\vec{e} \in \fq^{\ell - s} \setminus \{\vec{0}\}: \vec{e} \in S_{\bal}\}\mid\\
    &=
     \frac{1}{q^{\ell -s} -1}\sum_{\bal \in A \setminus \{\bals\}} \big(q^{\ell - s - b_{\bal}} -1\big).
    \end{split}
\]
Then, there exists some $\vec{e}^* = (e_{s+1}^*, \dots, e_{\ell}^*) \in \fq^{\ell - s}\setminus \{0\}$ such that 
$|\{\bal \in A\setminus \{\bals\}: \vec{e}^* \in S_{\bal}\}| \geq \mu$.
Let  $g(x) \define \sum_{i \in [s+1, \ell]}e_i^*g_i(x)$, $g(x)$ vanishes on at least $\mu$ elements of $A\setminus \{\bals\}$.
Furthermore, it follows from $\{g_i(\bals)\}_{i \in [s+1, \ell]}$ is linearly independent and $\vec{e}^* \neq 0$ that $g(\bals) \neq 0$. Hence, $g(x)$ corresponds to a nonzero dual codeword of RS$(A, k)$ and has at most $r-1$ roots, where $r = n-k$.
Then, $\mu \leq r - 1$,
which allows that

\begin{equation}
\label{eq:T}
\sum_{\bal \in A \setminus \{\bals\}}q^{-b_{\bal}} \leq \frac{(r - 1)(q^{\ell - s} - 1) + n - 1}{q^{\ell - s}}.
\end{equation}

Put 
\[
T \define \frac{(r - 1)(q^{\ell - s} - 1) + n - 1}{q^{\ell - s}},  \bave \define \log\frac{n-1}{T}.
\]

Let
\begin{equation}
\label{eq:b_min}
b_{\min} \define \min_{b_{\bal} \in \{0,\dots,\ell-s\}}\sum_{\bal \in A\setminus \{\bals\}}b_{\bal}
\end{equation}
subject to (\ref{eq:T}).

The minimum occurs when $b_{\bal}$'s are balanced and equal to $\bave$.
Supposing that $t$ is the biggest integer satisfying 
$$b_1^* = \dots = b_t^* = \lfloor \bave\rfloor, b_{t+1}^* = \dots = b_{n-1}^* = \lceil \bave\rceil,$$
where $\sum_{i \in [n-1]}q^{-b_i^*} \leq T$, and $(b_1^*, \dots, b_{n-1}^*)$ is an optimal solution for (\ref{eq:b_min}).
To obtain this solution, the $``$balancing$"$ procedure as in \cite{DauDinhKiahTranMilenkovic2021} is applied. The computation for $t$ is easily obtained. Then, we get the lower bound as desired.
\end{proof}

\subsection{A discussion on the subspace intersections with the lowest repair bandwidth}
\label{app:subspace_intersection_lowest_BW}
A condition for an $m$-dimensional subspace $\W$ so that the repair scheme constructed from this subspace by Lemma \ref{lem:bandwidth_intersection} and Corollary \ref{cr:full_length} obtains the minimal repair bandwidth among all $m$-dimensional $\fq$-subspaces is that the sum $\sum_{\bga \in \fqls}\dim(\bga\fqlms \cap \W)$ achieves the maximal value among all $m$-dimensional $\fq$-subspaces.
One concrete consideration for obtaining the maximal sum is the case when the intersection subspaces have dimension $0$ or $1$. More particularly, 
for a parameter $m$, if there exists an $m$-dimensional subspaces $\W_0$ with  $\dim(\bga\fqlms \cap \W_0) \in \{0,1\}$, for all $\bga \in \fqls$, then a sufficient condition for an arbitrary $m$-dimensional subspace $\W$ used to construct subspace polynomial repair scheme that obtains the lowest repair bandwidth, i.e., the sum $\sum_{\bal \in \fql\setminus\{\bals\}}\dim\big((\bal-\bals)\bde \fqlms \cap\W\big)$ achieves maximal value among all subspaces dimension $m$, is also the condition that $\dim(\bga\fqlms \cap \W) \in \{0,1\}$, for all $\bga \in \fqls$.
Moreover, for the codes RS$(n,k)$, where $n-k = q^m$, this condition is the necessary and sufficient condition for the repair scheme constructed by $\W$ obtaining the optimal repair bandwidth.
The repair schemes constructed in Theorems \ref{thm:bandwidth_coprime} and Theorem \ref{thm:greedy_scheme} are of this consideration. 
We will make the above discussion clearer in Corollary \ref{col:max_dim_with_01}.
Since our proof for the conclusion of subspace intersection of dimensions $0$ or $1$ is based on the \textit{majorization} of two real number sequences, we first recall some basic results on this problem. 
For two sequences of real numbers $x = (x_1,\dots,x_p)$ and $x' = (x'_1, \dots, x'_p)$,
supposing that $x_1\geq \dots \geq x_p$  and $x'_1\geq \dots \geq x'_p$, we say that $x$ is majorized by $x'$ or $x'$ majorizes $x$ if $\sum_{i=1}^px_i = \sum_{i=1}^px'_i$ and $\sum_{i=1}^j x_i\leq \sum_{i=1}^jx'_i$, for all $j\in [p-1]$ \cite[A.1, p.8]{MarshallOlkinArnold_2011}.

\begin{lemma}\cite[B.1, p.156]{MarshallOlkinArnold_2011}
    \label{lem:majorization_convex}
    The inequality $\sum_{i=1}^p \phi(x_i) \leq \sum_{i=1}^p\phi(x'_i)$ is satisfied for all continuous convex function $\phi: \mathbb{R} \rightarrow \mathbb{R}$ if and only if $x$ is majorized by $x'$.
    
\end{lemma}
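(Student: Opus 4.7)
The plan is to handle the two directions of this biconditional with different tools. For the necessity direction, assuming $x \prec x'$, I would invoke the Hardy--Littlewood--P\'olya characterization of majorization: $x$ can be obtained from $x'$ by a finite sequence of Robin-Hood (``T-'') transfers, each of which replaces a pair of coordinates $(a,b)$ having $a > b$ by $(a-\varepsilon, b+\varepsilon)$ for some $\varepsilon \in (0,(a-b)/2]$, leaving all other coordinates fixed. Each such transfer preserves the coordinate sum, and since $a-\varepsilon$ and $b+\varepsilon$ are convex combinations of $a$ and $b$ with complementary weights, convexity of $\phi$ yields $\phi(a-\varepsilon)+\phi(b+\varepsilon) \le \phi(a)+\phi(b)$, so the sum $\sum_i \phi(\cdot)$ is non-increasing along the sequence; telescoping delivers $\sum_i \phi(x_i) \le \sum_i \phi(x'_i)$. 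A parallel route uses Birkhoff--von~Neumann to write $x = Px'$ with $P$ doubly stochastic, and then applies Jensen's inequality coordinatewise together with the column-sum-one property.

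For the sufficiency direction, I would test the inequality against carefully chosen convex functions. Plugging in $\phi(t)=t$ and then $\phi(t)=-t$ (both continuous and convex) immediately forces $\sum_i x_i = \sum_i x'_i$. To extract the partial-sum inequalities, for each $k\in [p-1]$ I would use the piecewise-linear hinge $\phi_k(t) \define \max(t-x'_k,0)$, which is continuous and convex. The hypothesis gives $\sum_{i=1}^p \phi_k(x_i) \le \sum_{i=1}^p \phi_k(x'_i)$; because $x'$ is sorted in decreasing order, the right-hand side collapses to $\sum_{i=1}^k (x'_i - x'_k)$. On the left, restricting to the first $k$ (sorted) coordinates of $x$ and using $(t-x'_k)_+ \ge t-x'_k$ yields $\sum_{i=1}^p \phi_k(x_i) \ge \sum_{i=1}^k (x_i - x'_k)$. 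Rearranging and cancelling the $k\,x'_k$ terms gives exactly $\sum_{i=1}^k x_i \le \sum_{i=1}^k x'_i$, which is the required partial-sum condition.

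The main obstacle I anticipate is the sufficiency direction: one needs a family of test functions rich enough to pin down all $p-1$ partial-sum inequalities while remaining within the class of continuous convex functions. The hinges $\phi_k$ accomplish this, but the argument leans crucially on the sortedness of both sequences, since only then does $\sum_{i=1}^p \phi_k(x'_i) = \sum_{i=1}^k(x'_i-x'_k)$ align cleanly with the lower bound on $\sum_{i=1}^p \phi_k(x_i)$; without the sortedness hypothesis one would first have to rearrange, invoking the elementary fact that sorting does not change either side of the putative inequality.
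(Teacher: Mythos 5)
This lemma is imported by the paper directly from Marshall--Olkin--Arnold (cited as B.1, p.~156) with no proof given, so there is no in-paper argument to compare against; what you have written is a correct, self-contained proof of the classical Hardy--Littlewood--P\'olya characterization. Both directions check out: for necessity, the reduction to a finite sequence of Robin-Hood transfers (or, equivalently, the Birkhoff--von~Neumann route with Jensen applied coordinatewise and the column-sum-one property used to regroup) correctly yields $\phi(a-\varepsilon)+\phi(b+\varepsilon)\le\phi(a)+\phi(b)$ from convexity and hence the telescoped inequality; for sufficiency, testing against $\phi(t)=\pm t$ pins down equality of the total sums, and the hinge functions $\phi_k(t)=\max(t-x'_k,0)$ — combined with the sortedness of $x'$ to evaluate the right-hand side as $\sum_{i=1}^k(x'_i-x'_k)$ and the bounds $\phi_k\ge 0$ and $\phi_k(t)\ge t-x'_k$ to lower-bound the left-hand side by $\sum_{i=1}^k(x_i-x'_k)$ — deliver exactly the partial-sum inequalities. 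The only external ingredient you lean on is the HLP fact that majorization implies reachability by finitely many T-transforms, which is itself a standard (if nontrivial) lemma; if full self-containment were required one would need to prove that reduction too, but as a proof of the cited statement your argument is complete and is essentially the textbook one.
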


Now we have the condition to get maximal value for the sum of intersection dimensions.
Supposing that $p \define \frac{q^\ell -1}{q^a -1}$, which is the number of disjoint cosets of $\fqas$ in $\fqls$. Since each pair of cosets are completely coincided or disjoint, and for all $\bga' \in \bga\fqas, \bga'\fqas = \bga\fqas$, we only need to consider the sums over $p$ disjoint cosets with the value of each dimension in each sum repeated $|\fqas| = q^a -1$ times.
We have the following proposition.
\begin{proposition}
   \label{pro:majorization_compare_sum}
 Let $\bga_1\fqas, \dots, \bga_p\fqas$ are $p$ disjoint cosets and the two sequences $d \define (d_1, \dots, d_p)$, $d' \define (d'_1, \dots, d'_p)$ are the dimensions of the intersection of subspaces $\bga_i\fqa$ of these cosets with $\W$ and $\W'$, respectively. Without lost of generality, we can suppose that 
$d_1\geq \dots \geq d_p$  and $d'_1\geq \dots \geq d'_p$. 
Let $x \define (x_1, \dots, x_p)$ and $x' \define (x'_1, \dots, x'_p)$, where $x_i \define q^{d_i}-1$ if $d_i > 0$ and $x_i \define 0$ if $d_i = 0$, and  $x'_i \define q^{d'_i}-1$ if $d'_i > 0$ and $x'_i \define 0$ if $d'_i = 0$. Then, if $x$ is majorized by $x'$ then $\sum_{i=1}^{p}d_i \geq \sum_{i=1}^{p}d'_i$.
\end{proposition}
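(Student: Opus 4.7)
The plan is to derive Proposition~\ref{pro:majorization_compare_sum} from Lemma~\ref{lem:majorization_convex} by applying it to a suitably chosen concave function. First I would observe that the two-case definition of $x_i$ actually collapses into one: when $d_i=0$, we have $q^{d_i}-1=0$ as well, so uniformly $x_i=q^{d_i}-1$ (and likewise $x'_i=q^{d'_i}-1$). Inverting, $d_i=\log_q(x_i+1)$ and $d'_i=\log_q(x'_i+1)$, so the desired inequality $\sum_i d_i\geq\sum_i d'_i$ is exactly $\sum_i\phi(x_i)\geq\sum_i\phi(x'_i)$ for $\phi(t)\define\log_q(t+1)$.

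Second, I would verify that $\phi$ is strictly concave on $[0,\infty)$ by computing $\phi''(t)=-1/\!\big((t+1)^2\ln q\big)<0$ for $q>1$, so $-\phi$ is continuous and convex on $[0,\infty)$. Since Lemma~\ref{lem:majorization_convex} is phrased for continuous convex maps $\bbR\to\bbR$, I would extend $-\phi$ affinely on $(-\infty,0)$ (using the tangent at $0$) to a continuous convex function on all of $\bbR$; this extension agrees with $-\phi$ on the domain $[0,\infty)$ where the coordinates of $x$ and $x'$ actually lie, so evaluating the sums is unaffected.

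Finally, since $x$ is majorized by $x'$, Lemma~\ref{lem:majorization_convex} applied to the convex function $-\phi$ yields
\[
\sum_{i=1}^{p}(-\phi)(x_i)\;\leq\;\sum_{i=1}^{p}(-\phi)(x'_i),
\]
which upon negation reads $\sum_i\phi(x_i)\geq\sum_i\phi(x'_i)$, i.e.\ $\sum_i d_i\geq\sum_i d'_i$, as claimed. No substantive obstacle arises: the whole argument is a one-line application of the cited lemma once the hidden logarithm $d_i=\log_q(x_i+1)$ is recognized; the only point requiring care is the sign flip when passing from the concave function $\phi$ to the convex function $-\phi$, which reverses the direction of the majorization inequality and produces exactly the asserted bound.
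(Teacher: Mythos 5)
Your proposal is correct and follows essentially the same route as the paper: both arguments reduce the claim to Lemma~\ref{lem:majorization_convex} applied to the convex function $-\log_q(t+1)$ (your $-\phi$ is exactly the paper's $\phi$), using $d_i=\log_q(x_i+1)$ and the sign flip from concavity to convexity. Your added remarks --- that the two-case definition of $x_i$ collapses to $x_i=q^{d_i}-1$, and that the function should be affinely extended to all of $\bbR$ to match the lemma's hypotheses --- are minor tidiness improvements over the paper's one-line proof, not a different argument.
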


\begin{proof}
Since $x$ is majorized by $x'$ and
$d_i = -\phi(x_i), d'_i = -\phi(x'_i)$, where $\phi(t) \define -\log_q(t+1)$ is a continuous convex function over $[0, +\infty)$.
The proof is completed by applying Lemma \ref{lem:majorization_convex} for $x$ and $x'$ and the computation of $d_i$ and $d'_i$ through function $\phi(t)$.
\end{proof}

\begin{corollary}
     \label{col:max_dim_with_01}
    Let $a|\ell$, $\W$ and $\W'$ are two $m$-dimensional subspaces of $\fql$ where $\dim(\bga\fqa\cap\W) \in \{0, 1\}$, for all $\bga \in \fqls$. Then, $\sum_{\bga\in \fqls}\dim(\bga\fqa \cap \W) \geq \sum_{\bga\in \fqls}\dim(\bga\fqa \cap \W')$.  Moreover, $\sum_{\bga\in \fqls}\dim(\bga\fqa \cap \W) = \frac{(q^a -1)(q^m -1)}{q-1}$, and if there exists $\bga' \in \fqls$ so that $\dim(\bga'\fqa \cap \W') > 1$ then $\sum_{\bga\in \fqls}\dim(\bga\fqa \cap \W) > \sum_{\bga\in \fqls}\dim(\bga\fqa \cap \W')$.
\end{corollary}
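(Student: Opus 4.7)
The plan is to derive Corollary~\ref{col:max_dim_with_01} directly from Proposition~\ref{pro:majorization_compare_sum} by verifying that the ``intersection'' sequence arising from $\W$ (whose intersections all have dimension $0$ or $1$) is majorized, in the $x$-transform of Proposition~\ref{pro:majorization_compare_sum}, by the one arising from any other $m$-dimensional $\W'$. First I would reduce the element-wise sum to a coset-wise sum: the cosets $\bga_1\fqas,\ldots,\bga_p\fqas$ partition $\fqls$, $\bga\fqa=\bga'\fqa$ whenever $\bga'\in\bga\fqas$, so $\sum_{\bga\in\fqls}\dim(\bga\fqa\cap\W)=(q^a-1)\sum_{i=1}^p d_i$, with $d_i\triangleq\dim(\bga_i\fqa\cap\W)$, and analogously for $\W'$ with $d'_i$. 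Then the conclusion will follow from $\sum d_i\geq\sum d'_i$, which is exactly what Proposition~\ref{pro:majorization_compare_sum} will give once majorization is established.

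Next I would set up the totals. Since the subspaces $\bga_i\fqa$ meet pairwise only in $\{0\}$, counting nonzero elements of $\W$ gives $\sum_{i=1}^p(q^{d_i}-1)=|\W\setminus\{0\}|=q^m-1$, and identically $\sum_i(q^{d'_i}-1)=q^m-1$; this supplies the equality-of-totals half of majorization. For the partial sums, sort $d,d'$ decreasingly. Since $d_i\in\{0,1\}$, the sequence $x$ has exactly $k\triangleq(q^m-1)/(q-1)$ leading entries equal to $q-1$ and zeros thereafter, where $k$ is the number of $\fq$-lines of $\W$ (each contained in a unique $\bga_i\fqa$, by Lemma~\ref{lem:coprime}-style disjointness of the cosets). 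For $j\leq k$ I split into two cases: if $x'_j>0$ then $x'_i\geq x'_j\geq q-1$ for every $i\leq j$ and $\sum_{i=1}^j x'_i\geq j(q-1)=\sum_{i=1}^j x_i$; if $x'_j=0$ then $\sum_{i=1}^j x'_i=q^m-1\geq k(q-1)\geq j(q-1)$.

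The main obstacle will be the regime $j\geq k$, where the naive comparison goes the wrong way and the majorization inequality must in fact be an equality. Here I would use the elementary identity $q^d-1=(q-1)(1+q+\cdots+q^{d-1})\geq d(q-1)$, summed over $i$, to obtain $\sum_i d'_i\leq(q^m-1)/(q-1)=k$. This bounds the number of nonzero entries of $x'$ by $k$, so for every $j\geq k$ we have $\sum_{i=1}^j x'_i=q^m-1=\sum_{i=1}^j x_i$, closing the majorization argument. Proposition~\ref{pro:majorization_compare_sum} then yields $\sum_i d_i\geq\sum_i d'_i$; multiplying by $q^a-1$ proves the first inequality, and the explicit value $(q^a-1)(q^m-1)/(q-1)$ follows directly from $\sum_i d_i=k$. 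Finally, for strictness, if some $d'_{i_0}\geq 2$ then the same elementary identity is strict at index $i_0$ (because $1+q+\cdots+q^{d'_{i_0}-1}>d'_{i_0}$ when $q\geq 2$ and $d'_{i_0}\geq 2$), hence $q^m-1>(q-1)\sum_i d'_i$, i.e.\ $\sum_i d'_i<k=\sum_i d_i$, and the overall inequality becomes strict after multiplying by $q^a-1$.
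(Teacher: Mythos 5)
Your proposal is correct, and it follows the same overall route as the paper: reduce to the $p$ disjoint cosets, invoke the majorization machinery of Proposition~\ref{pro:majorization_compare_sum}, and count the partition of $\W$ into lines to get the explicit value $\frac{(q^a-1)(q^m-1)}{q-1}$. The substantive difference is that you actually verify the majorization hypothesis, which the paper's own proof of the corollary silently assumes; this verification is the only nontrivial step, since Proposition~\ref{pro:majorization_compare_sum} is vacuous without it. Your two key observations --- that the sets $(\bga_i\fqa\setminus\{0\})\cap\W'$ partition $(\W')^*$ so that $\sum_i(q^{d'_i}-1)=q^m-1$ for \emph{every} $m$-dimensional subspace, and that $q^d-1\geq d(q-1)$ with equality iff $d\leq 1$ --- are exactly what is needed. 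It is worth noting that once you have these two facts, the corollary follows immediately without any appeal to majorization: they give $\sum_i d'_i\leq (q^m-1)/(q-1)=\sum_i d_i$ directly, with strictness when some $d'_{i_0}\geq 2$. So your argument, read as written, is a strictly more self-contained (and arguably cleaner) proof than the paper's, which routes the same inequality through Lemma~\ref{lem:majorization_convex} with the convex function $\phi(t)=-\log_q(t+1)$; the detour through Proposition~\ref{pro:majorization_compare_sum} in your write-up is sound but redundant.
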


\begin{proof}
The proof is completed by applying Proposition \ref{pro:majorization_compare_sum} and Lemma \ref{lem:majorization_convex} for two sequences $x$, $x'$ in the special case where $d_i \in \{0,1\}$, for all $i \in [p]$.
When all the intersections $\bga\fqas \cap \W$ is of dimension $0$ or $1$, the set of $1$-dimensional intersections is a partition of $\W$, which allows that the number of these subspaces is $\frac{q^m - 1}{q-1}$. Since each of the intersection is repeated $q^a-1$ times, the total of dimensions is $\frac{(q^a -1)(q^m -1)}{q-1}$.
If there exists $d'_j >1$, for some $j \in [p]$, then the strict inequality is achieved.
\end{proof}

\end{document}